\newtheorem{defin}{Definition}
\newtheorem{theorem}{Theorem}
\newtheorem{lemma}{Lemma}
\newcommand{\be}{\begin{equation}}
\newcommand{\ee}{\end{equation}}
\newcommand{\ba}{\begin{array}}
\newcommand{\ea}{\end{array}}
\newcommand{\bea}{\begin{eqnarray}}
\newcommand{\eea}{\end{eqnarray}}
\newcommand{\tran}{^{\mbox{\scriptsize T}}}  
\newcommand{\vbar}{\raisebox{.17ex}{\rule{.04em}{1.35ex}}}
\newcommand{\vbarind}{\raisebox{.01ex}{\rule{.04em}{1.1ex}}}
\newcommand{\D}{\ifmmode {\rm I}\hspace{-.2em}{\rm D} \else ${\rm I}\hspace{-.2em}{\rm D}$ \fi}
\newcommand{\T}{\ifmmode {\rm I}\hspace{-.2em}{\rm T} \else ${\rm I}\hspace{-.2em}{\rm T}$ \fi}
\newcommand{\B}{\ifmmode {\rm I}\hspace{-.2em}{\rm B} \else \mbox{${\rm I}\hspace{-.2em}{\rm B}$} \fi}
\newcommand{\Hil}{\ifmmode {\rm I}\hspace{-.2em}{\rm H} \else \mbox{${\rm I}\hspace{-.2em}{\rm H}$} \fi}
\newcommand{\C}{\ifmmode \hspace{.2em}\vbar\hspace{-.31em}{\rm C} \else \mbox{$\hspace{.2em}\vbar\hspace{-.31em}{\rm C}$} \fi}
\newcommand{\Cind}{\ifmmode \hspace{.2em}\vbarind\hspace{-.25em}{\rm C} \else \mbox{$\hspace{.2em}\vbarind\hspace{-.25em}{\rm C}$} \fi}
\newcommand{\Q}{\ifmmode \hspace{.2em}\vbar\hspace{-.31em}{\rm Q} \else \mbox{$\hspace{.2em}\vbar\hspace{-.31em}{\rm Q}$} \fi}
\newcommand{\Z}{\ifmmode {\rm Z}\hspace{-.28em}{\rm Z} \else ${\rm Z}\hspace{-.38em}{\rm Z}$ \fi}
\renewcommand{\vec}[1]{{\bf{#1}}}     
\newcommand{\R}{\mathbb{R}}
\newcommand{\N}{\mathbb{N}}
\newcommand{\Blambda}{\lambda}
\newcommand{\BLambda}{\Lambda}
\newcommand{\BULambda}{\bar{\Blambda}}
\newcommand{\BLLambda}{\underline{\Blambda}}
\newcommand{\bla}{\boldsymbol\lambda}
\newcommand{\X}{X}
\newcommand{\Rm}{R}
\renewcommand{\v}{v}
\newcommand{\vb}{\bar{v}}
\newcommand{\vu}{\underline{v}}
\newcommand{\q}{q}
\newcommand{\qb}{\bar{q}}
\newcommand{\qu}{\underline{q}}
\newcommand{\pb}{\bar{p}}
\newcommand{\pu}{\underline{p}}
\newcommand{\p}{p}
\newcommand{\re}{^{\texttt{P}}}
\newcommand{\im}{^{\texttt{Q}}}
\definecolor{lightblue}{rgb}{.90,.95,1}
\newcommand{\un}[1]{\underline{#1}}
\newcommand{\dist}{\texttt{dist}}
\newtheorem{remark}{Remark}
\title{ \LARGE \bf Distributed  Optimal Voltage Control \\ with Asynchronous and Delayed   Communication 
}
\author{Sindri  Magn\'{u}sson, Guannan Qu, and Na Li
\thanks{*The work was supported by NSF 1608509, NSF CAREER 1553407, and ARPA-E through the NODES program, and the Harvard Climate Change Solution Funds.}
\thanks{Sindri  Magn\'{u}sson is with  the School of Electrical Engineering and Computer science at KTH  Royal  Institute  of  Technology. This work was partly performed while Sindri Magn\'{u}sson was a postdoc fellow at Harvard University. (Email: sindrim@kth.se)

Guannan Qu is with the Department of Computing and Mathematical Sciences at California Institute of Technology.  This work was partly performed while Guannan Qu was a PhD student at Harvard University.  (Email: gqu@caltech.edu)

Na Li is with the Harvard John A. Paulson School of Engineering and Applied Science  (Email:  nali@seas.harvard.edu)

}
}
\begin{document}

\maketitle

\begin{abstract}

  The increased penetration of volatile  renewable energy into distribution networks necessities  more  efficient distributed voltage control. In this paper, we design distributed feedback control algorithms where each bus can inject \emph{both active and reactive} power into the grid to regulate the voltages. The control law on each bus is only based on local voltage measurements and communication to its physical neighbors. Moreover, the  buses can perform their updates \emph{asynchronously}  without receiving information from their neighbors for periods of time. The algorithm enforces \emph{hard upper and lower limits} on the active and reactive  powers at every iteration. We prove that the algorithm converges to the optimal feasible voltage profile, assuming linear power flows. This provable convergence is maintained under bounded communication delays and asynchronous communications. We further numerically test the performance of the algorithm using the full \emph{nonlinear AC power flow} model. Our simulations show the effectiveness of our algorithm on realistic networks with both static and fluctuating loads, even in the presence of communication delays. 
  %
%
%
  \end{abstract}

\begin{IEEEkeywords}
  Distributed Optimization, Smart Grid, Voltage Control, Distributed Control.
\end{IEEEkeywords}



 	\section*{{Nomenclature}}

\addcontentsline{toc}{section}{Nomenclature}

	\subsection{Parameters}
	
	\begin{IEEEdescription}[\IEEEusemathlabelsep\IEEEsetlabelwidth{$V_1,V_2,V_3$}]
	       \item [$N$] The number of buses (not including the substation) .
		\item [ $\mathcal{N}_0, \mathcal{N}, \mathcal{E}$] $\mathcal{N}=\{0,1,\ldots,n\}$ is the set of buses with $0$ being the substation; $\mathcal{N}=\mathcal{N}_0\setminus \{0\}$; $\mathcal{E}$ is the set of lines in the network.
		\item [$\sigma_i,\mathcal{C}_i,\mathcal{P}_i$]  $\sigma_i\in \mathcal{N}$ is the parent of bus $i$; 
		  $\mathcal{C}_i$ is the set of children of bus $i$;
		  $\mathcal{P}_i$ is the set of lines in the network on the unique path from the substation to bus $i$.
		\item [$\dist(i,j)$]  The  number of edges in the shortest path between the nodes $i,j\in \mathcal{N}$. 
		\item [$r_{ij}, x_{ij}$]  The resistance and reactance on the transmission line between $i,j$.
\item [$X$, $R$] The matrices in linearized branch-flow model.
		\item [$\bar{p}, \underline{p}$,$\bar{q}, \underline{q}$] Upper and lower limits of the active and reactive power, respectively.
		\item [$\bar{v}, \underline{v}, \bar{s}$] Upper voltage limit, lower voltage limits, and apparent power limits.
		\item [$a_i\re,b_i\re,c_i\re$] Parameters of the cost function $C\re(\cdot)$ related to bus $i\in \mathcal{N}$.
		\item [$a_i\im,b_i\im,c_i\im$] Parameters of the cost function $C\im(\cdot)$ related to bus $i\in \mathcal{N}$.
		\item [$a_{\min},L$] $a_{\min}=\{a_1\re,\ldots, a_N\re, a_1\im,\ldots, a_N\im\}$; $L=(||X||^2+||R||^2)/2$.
		\item [$t$] The iteration index.
		\item [$\tau_{ij}(t),\tau_{\max}$] $\tau_{ij}(t)$ is the communication delay in the communication link $(i,j)\in \mathcal{E}$ at iteration $t$; $\tau_{\max}$ is the maximum delay, i.e., $\tau_{ij}(t)\leq \tau_{\max}$ for all $(i,j)\in \mathcal{E}$ and $t$.
		\item [$\gamma$] The step-size in our algorithm
	\end{IEEEdescription}

	\subsection{Variables and Functions}
	\begin{IEEEdescription}[\IEEEusemathlabelsep\IEEEsetlabelwidth{$V_1,V_2,V_3$}]
		\item [$p,q,v$]  The active powers, reactive powers, and squared voltage magnitude, respectively.
		\item [$\bar{\lambda},\un{\lambda}$] Dual variables associated with the upper and lower voltage constraint, respectively. 
		\item  [$\lambda,\bla$] $\lambda=\un{\lambda}-\bar{\lambda}$; $\bla=(\un{\lambda},\bar{\lambda})$.
		\item [$\alpha$, $\beta\re$, $\beta\im$] The communicated messages.
		\item [$\hat{\alpha}$, $\hat{\beta}\re$, $\hat{\beta}\im$] Delayed version of the communicated messages.
		\item [$z_i\re,z_i\im$] $z_i\re$ is a local estimation of $R\lambda$;  $z_i\im$ is a local estimation of $X\lambda$.
	      \item [$C\re(\cdot), C\im(\cdot)$] The cost functions for the active and reactive powers, respectively. 
	      \item [$D(\cdot),\mathcal{L}(\cdot)$] The dual function and the Lagrangian function, respectively.

	\end{IEEEdescription}

\subsection{Notations}
	\begin{IEEEdescription}[\IEEEusemathlabelsep\IEEEsetlabelwidth{$\jmath:=\sqrt{-1}$}]
               \item  [$\R,\C,\N$] The set of real, complex, and natural numbers, respectively. 
               \item [$\R^n,\R^{n\times m}$] The set of real $n$ vectors and $n{\times} m$ matrices, respectively.
		\item[$ P_{ij},p_i$] The $i,j$-th entry of matrix $P$ and $i$-th entry of vector $p$, respectively.
		\item[$\langle \cdot,\cdot\rangle $] Inner product of vectors.
                \item [$\vec{i}$] The imaginary unit $\vec{i}=\sqrt{-1}$.
		\item[$\mathbf{1}$] $N\times1$ column vector with all ones. 	
		\item[$\lceil x\rceil+, \lbrack x\rbrack_{\un{x}}^{\bar{x}}$ ] The projection of vector $x$ onto the positive orthant and the box constraint $[\un{x},\bar{x}]$.
				\item[$\Vert \cdot \Vert$] Euclidean norm for vectors, spectral norm for matrices.

	\end{IEEEdescription}

%
%

 \section{Introduction}
\subsection{Motivation}
 Power girds  are increasing the volume of renewable energy generation from unpredictable sources such as solar and wind.  
 As a consequence, 
   large scale penetration of renewable energy will cause faster voltage fluctuations than today's networks can handle~\cite{carvalho2008distributed,molzahn2017survey}. 
   This means that too much  injection of renewable energy can easily overload the power systems.  
%
 However, the grid becomes better equipped to handle these challenges than before. 
 For example, many smart home appliances will have adjustable active power demands that can be used to stabilize the voltage fluctuations caused by abruptly changes in renewable power generation.  
 Similar flexible active power adjustments  can come from   smart distributed power generators and batteries of electric vehicles. 
  It is also possible to use flexible reactive power to regulate the voltage fluctuations, e.g., from PV-inverters. 
 However, to take advantage of the flexible active and reactive power injections and to use them to regulate the voltage fluctuations  sophisticated  control algorithms are needed.




\subsection{Related Work}
 There is a vast literature on voltage control algorithms.   
 Most works focus on VAR control where the buses regulate the voltage fluctuations by adjusting reactive power injection based on voltage measurements. 
%
 %
%
 %
 %
  Perhaps the most established of these algorithms are droop controllers~\cite{farivar2013equilibrium,jahangiri2013distributed},  which are  implemented in the
  IEEE 1547-2018 standard~\cite{droopStd2018ieee}.  
 In these algorithms  each  bus  updates its  reactive  power  based  on piecewise-linear control law from  local voltage measurements. 
 However, droop control  can fail in ensuring feasible voltages~\cite{Li2014} and can become inefficient in large networks~\cite{Zhu_2015}.
 Other algorithms based purely on local measurements have addressed some of these issues by relaxing voltage or reactive power constraints~\cite{Li2014,Zhu_2015}. 
%
  However,  even though such  local control algorithms may work well in some cases, e.g., when  the voltage or reactive power limits are relaxed, they generally fail in providing feasible solutions as   illustrated in~\cite{Cavraro2016,Bolognani2019}.
  In particular, they cannot guarantee that the voltage and reactive power limits are satisfied simultaneously. 
   This means that communication between the network's buses is necessary to solve the general voltage control problem. 
   
  This has motivated studies on distributed VAR voltage control where each bus updates its reactive power based on local voltage measurements and communications to its neighbors  in  the power network~\cite{Bolognani2013,bolognani2014distributed,magnusson2017voltage,kekatos2014stochastic,liu2018hybrid,liu2018distributed,qu2018optimal}.  
  The convergence of all these algorithms to a stable voltage profile   is proved under  linearzed power flow models.  
  However, for the algorithms in~\cite{Bolognani2013,bolognani2014distributed,magnusson2017voltage} to work the physical limits on the reactive power must be relaxed, which is often prohibitive in practice.  
  The algorithms in~\cite{kekatos2014stochastic,liu2018hybrid,liu2018distributed} relax the voltage constraint and introduce a penalty on violating the constraint instead. 
 Moreover, all of the above papers consider only reactive power control, whereas we consider both reactive and active power control in this paper.

 To perform joint active and reactive power control typically requires solving  an Optimal Power Flow (OPF) problem. 
 There is  much literature on distributed algorithms for solving OPF~\cite{dall2013distributed,erseghe2014distributed,magnusson2015distributed,vsulc2014optimal,Zhang_2015,kraning2014dynamic}.
 However, solving a full OPF problem is a time consuming process that requires multiple communication rounds.
 It is impractical to repeatedly solve a full OPF problem at the fast time scales that are needed to respond to volatile voltage fluctuations.  
 This has motivated studies on dynamic/online OPF algorithms~\cite{dall2016optimal,tang2017real}, where the OPF problem is updated at every iteration based on the most recent measurements. 
 However, these algorithms require global information, i.e.,  they assume that  at every iteration either a system operator communicates with all the buses  or that every bus communicates to every other bus. 
 Such global communications are often difficult or expensive since they lead  to long delays and large network congestion in addition to violating the privacy of buses. 
 Our work in this paper considers distributed algorithms  where only neighbors in the power network communicate and communication can be asynchronous or delayed. 
  
  Limited communication, such as  asynchronous updates and delays,  are common in practice but hard to handle in distributed algorithms. Most existing voltage control algorithms  require the buses to wait until they  receive  information from all of their neighbors before they can perform a control action,   which is clearly a limitation if the algorithms are supposed to run in real time.  There are some exception, however. 
   For example, algorithms with random package delays and asynchronous updates are studied in~\cite{gatsis2012residential,liu2018distributed,bolognani2014distributed}. Other types  of communication limitations have also been considered, such as event triggered communications   \cite{olivier2015active,fan2016distributed,Magnusson2019optimal} and limited bandwidth~\cite{magnusson2019voltage}. 
  However, all of these works have some limitations. For example, the algorithms in~\cite{gatsis2012residential,olivier2015active,Magnusson2019optimal} require global communications and the work   in~\cite{bolognani2014distributed,liu2018distributed,magnusson2017voltage,magnusson2019voltage} only considers reactive power control and must either relax the reactive power or voltage constraints to ensure convergence.  

\subsection{Main Contributions}

  The main contribution of this paper is to design asynchronous distributed algorithms for optimal voltage control using \emph{both active and reactive power} adjustment.  
  Each bus updates its active/reactive power with a local control law that is only based on local voltage measurements and communications  from its \emph{neighbors in the network}. 
  The buses can perform their local updates asynchronously even if they do not receive any communication from other nodes over some period of time. 
  The algorithm \emph{enforces hard upper and lower limits} on the active and  reactive powers at every iteration of the algorithm.  
   We prove the algorithms converges to an optimal solution to an optimal power flow problem with a feasible voltage profile, even with  asynchronous and delayed communications. We prove the convergence assuming a linearized relationship between voltage and  power injections.  However, we illustrate the performance of our algorithm using the \textit{full nonlinear AC power flow model} in the numerical studies.
   We show that our algorithm can well handle \emph{time-varying environments} where the loading situation of the distribution network is changing in the meantime of the algorithm.  
   Moreover, our numerical results shows  that our algorithm can reduce $80\%$ of the communication compared to a synchronous algorithm for achieving similar voltage control performance.

  There is an intuitive explanation for why our algorithm is robust to asynchronous and delayed communications. 
  Our algorithm is equivalent to asynchronous dual decomposition algorithms~\cite{Low_1999,chiang2007layering}.   This equivalence is not obvious. In fact, the major efforts of our proofs go into showing this equivalence.
  Nevertheless, this means that our algorithm enjoys the strong robustness properties for asynchronous communications that have been established over a long time for dual decomposition in theory and practice~\cite{chiang2007layering}.

It should be highlighted that our work makes a significant contribution  to distributed voltage control even in the absence of asynchronous and delayed communications. 
 This is, firstly,  because existing distributed voltage control algorithms consider only reactive power control. 
   Moreover, it is generally not possible to extend the ideas used to decompose these algorithms to handle both reactive and active power control.  
 Secondly, most distributed voltage control algorithms that ensure convergence to a feasible voltage profile do so  by allowing a violation of the reactive power constraint in the transient.
 There are two exceptions to this~\cite{Bolognani2019,qu2018optimal}. 
    In the algorithm in~\cite{Bolognani2019}, to compute each new control action the nodes need to solve a subproblem by performing multiple iterations of communications, which is clearly limiting for algorithms that should run in real-time.  
     However, our algorithm only requires one communication round per control action. 
   Compared to~\cite{qu2018optimal}, our  algorithm  development  is  different. 
   The algorithm in~\cite{qu2018optimal} is based on inexact primal-dual saddle point iterations, which generally converge very slowly.
    Our  algorithm  is  equivalent to   asynchronous  dual  decomposition, which generally has better convergence properties. This is also why we can prove convergence in the presence of asynchronous and delayed communications.

A very preliminary version of this work was presented in~[33]. 
 Compared to this paper,~[33] considers only reactive power control, omits most of the proofs and only presents very limited numerical tests. Including real power as control actions requires a significant amount of change in the algorithm and the analysis. All the numerical tests are new and many high-fidelity cases are tested and discussed. 
 Finally, almost the entire paper has been rewritten,  with much more detailed discussions on the main results and proofs explaining why the method works.

\section{System Model and Problem Formulation}
 \label{sec:SysModandPF}

 %
 \subsection{System Model: Branch Flow  for Radial Networks}

   Consider a radial power distribution network with $N+1$ buses represented by the set $\mathcal{N}_0=\{0\}\cup \mathcal{N}$, where $\mathcal{N}=\{1,\ldots, N\}$.  
%
  Bus $0$ is a feeder bus and the buses in $\mathcal{N}$ are branch buses. 
 Let $\mathcal{E}\subseteq \mathcal{N}_0\times \mathcal{N}_0$ denote the set of directed flow lines, so if $(i,j)\in \mathcal{E}$ then $i$ is the parent of $j$.
 For each $i$, let  $s_i=p_i+\vec{i}q_i \in \C$, $V_i\in \C$, and $v_i\in \R_+$ denote the complex power injection, complex voltage, and squared voltage magnitude, respectively,  at Bus $i$.
 For each $(i.j)\in \mathcal{E}$, let $S_{ij}=P_{ij}+\vec{i}Q_{ij}\in \C$, $I_{ij}\in \C$, and $z_{ij}=r_{ij}+\vec{i}x_{ij}\in \C$ denote the complex power flow, current, and impedance in the line from Bus $i$ to Bus $j$.
 The relationship between the variables can be expressed as~\cite{baran1989optimal,Baran1989},
\begin{subequations} \label{eq:LinBranchFlow}
 \begin{align}
   -p_i =& P_{\sigma_i i}-r_{\sigma_i i} l_{\sigma_i,i}-\sum_{k:(i,k)\in \mathcal{E}} P_{ik}, \hspace{0.8cm}i\in \mathcal{N}, \\
   -q_i =& Q_{\sigma_i i}-x_{\sigma_i i} l_{\sigma_i,i}- \sum_{k:(i,k)\in \mathcal{E}} Q_{ik}, \hspace{0.6cm}i\in \mathcal{N}, \\
    v_j-v_i =&  -2(r_{ij} P_{ij}+x_{ij}Q_{ij}) \notag
               + (r_{ij}^2+x_{ij}^2)l_{ij}, \\ & \hspace{4.8cm}(i,j) \in \mathcal{E}, \\
    l_{ij} =& \frac{P_{ij}^2+Q_{ij}^2}{v_i}  \hspace{3.3cm}(i,j)\in \mathcal{E},
 \end{align}
\end{subequations}
 where $\sigma_i$ is the parent of bus $i\in \mathcal{N}$, i.e., the unique $\sigma_i\in \mathcal{N}_0$ with $(\sigma_i,i)\in \mathcal{E}$, and $l_{ij}=|I_{ij}|^2$. 

 We develop our voltage control algorithm for the general nonlinear power flow in Equation~\eqref{eq:LinBranchFlow}. 
 However, we prove the convergence of the algorithm by  consider a linearied version of the above model. 
 %
In particular, we consider the linear Distflow approximation of the above equations,  which gives a good approximation in radial distribution networks~\cite{Baran1989}. 
 The  linear Distflow model is obtained by setting $l_{ij}=0$ 
 in which case Equation~\eqref{eq:LinBranchFlow} can be written as
 \begin{equation}\v= \Rm \p+ \X \q + \vec{1}v_0,  \label{eq:PhysicalRelationship-1}\end{equation}
  where 
$\v=[v_1,\ldots,v_N]\tran$, $\q=[q_1,\ldots,q_N]\tran$, $p=[p_1,\ldots,p_N]\tran$, 
 $$\X_{ij}=2 \hspace{-0.4cm} \sum_{(h,k)\in\mathcal{P}_i\cap \mathcal{P}_j}  \hspace{-0.4cm} x_{hk},~~\text{ and }~~\Rm_{ij}=2 \hspace{-0.4cm} \sum_{(h,k)\in\mathcal{P}_i\cap \mathcal{P}_j} \hspace{-0.4cm} r_{hk},$$
 where $\mathcal{P}_i\subseteq \mathcal{E}$ is the set of edges in the path from Bus 0 to Bus $i$.

 \subsection{Optimal Voltage Control} \label{subsec:VoltReg}

 The goal of this paper is to design distributed feedback control laws for the active and reactive powers that drive the voltages $\v$ to some feasible range ${v}\in [\vu, \vb]$.
   To that end,  we assume that the active and reactive power injections can be adjusted within some interval $p \in[\pu,\pb]$ and $\q \in[\qu,\qb]$.\footnote{The active power (and the reactive power similarly) can be decomposed into $p=p^{\text{Adj.}}+p^{\text{Con.}}$ where $p^{\text{Adj.}}$ is adjustable reactive power and $p^{\text{Con.}}$ is the fixed reactive power consumption.}
   The active power can typically be adjusted by demand response programs in smart home appliances, HVAC systems, vehicle charging stations, etc. 
   The reactive  power can be adjusted by PV-inverters. 
  For active and reactive power injection $p,q\in \R^N$ the resulting voltage $v(p,q)$ can be computed by solving Equation~\eqref{eq:LinBranchFlow}, i.e.,
  \begin{align}
      v(p,q)= \text{ Solution to Equation~\eqref{eq:LinBranchFlow} for given $p$ and $q$. }
  \end{align}
 Ideally, we wish to find the optimal active and reactive power:
\begin{equation} \label{MainProblem} 
\begin{aligned}
& \underset{p,q\in \R^N}{\text{minimize}}
& & C\re(p) +C\im(q) \\
& \text{subject to}
&&  \vu \leq \v(p,q)  \leq \vb  \\
&&& \pu \leq p \leq \pb, \\
&&& \qu \leq \q \leq \qb.
\end{aligned}
\end{equation}
 where
 \begin{equation}
  C\re(p)= \sum_{i=1}^NC_i\re(p)=\sum_{i=1}^N \frac{a_i\re}{2} p_i^2+b_i\re p_i +c_i\re \label{eq:c_i_p}
 \end{equation}
 is the generation cost for active power and
 \begin{equation}
 C\im(q)= \sum_{i=1}^N C_i\im(q) = \sum_{i=1}^N \frac{a_i\im}{2} q_i^2+b_i\im q_i +c_i\im \label{eq:c_i_q}
 \end{equation}
 is the generation cost of reactive power. 
 We provide the dual of~\eqref{MainProblem} in Section~\ref{Sec:AI}.
 Throughout the paper we assume that Problem~\eqref{MainProblem} is feasible and $a_i>0$ for all $i$. 
 Moreover, set
 $$a_{\min}:=\min\{a_1\re,\ldots,a_N\re,a_1\im,\ldots,a_N\im\}.$$
 %
 %
 The goal of this paper is to devise distributed algorithms that solve the problem that are robust to communication delays and use asynchronous update among devices. 

\subsection{Distributed Optimal Voltage Control}

 We will propose a distributed feedback control algorithm to solve~\eqref{MainProblem}. 
 Ideally, we would like  algorithms  that use only local information. 
 That is, each bus $i\in \mathcal{N}$ initializes its active and reactive powers as 
 $$ p_i(0)\in[\un{p}_i,\bar{p}_i] ~~\text{ and }~~   q_i(0)\in[\un{q}_i,\bar{q}_i]$$ 
 and then updates it as follows, for iteration index $t\in\N$,
  \begin{subequations} \label{EQ:FBC}
\begin{align} 
 & \textbf{Measurement:}&& v_i(t) = {v}_i(\q(t)) \label{EQ:FBC-A} \\
 &\textbf{P-Control:} &&    p_i(t{+}1){=}\texttt{AlgP}_i^t(\texttt{Local\_Info}_i(t)),  \label{EQ:FBC-B}   \\
 &\textbf{Q-Control:} &&    q_i(t{+}1){=}\texttt{AlgQ}_i^t(\texttt{Local\_Info}_i(t)),  \label{EQ:FBC-C} 
\end{align}
\end{subequations}
 where $\texttt{AlgP}_i^t:\R^{3(t+1)}\rightarrow [\un{p}_i,\bar{p}_i]$ and $\texttt{AlgQ}_i^t:\R^{3(t+1)}\rightarrow [\un{q}_i,\bar{q}_i]$ are, respectively, the local active and reactive power control algorithms and
 \begin{align*} 
    \texttt{Local\_Info}_i(t)=&\{p_i(0),\ldots,p_i(t),q_i(0),\ldots,q_i(t), \\ &~~~~~~~~~~~~~~~~~~~~~~~~~v_i(0),\ldots,v_i(t)\},
  \end{align*} 
is the local information available to bus $i$ at iteration $t$.   
   Unfortunately, there exists no  local algorithm that is guaranteed to solve the optimization problem, due to the impossibility result in~\cite{Cavraro2016}. 
 Therefore, it is necessary to include some communication into the control law.  
 Such communication can be modeled as follows:  
 \begin{align*} 
    (p_i(t{+}1),q_i(t{+}1)) = \texttt{Alg}_i^t(\texttt{Local\_Info}_i(t),\texttt{Comm}_i(t)),
 \end{align*}
 where $\texttt{Comm}_i(t)$ is information that bus $i$ has received from other buses until iteration $t$. 
 In this paper, we consider algorithms in this form 
 when the communicated information $\texttt{Comm}_i(t)$ at each iteration comes only from physical neighbours of node $i$.  
 Moreover, there the algorithms considered in this paper are provably robust  to asynchronous and delayed communication. 


\section{Algorithm and Main Results} 
 \label{subseq:PF-LDP}

 


 \subsection{Algorithm} \label{subseq:Alg}

 We now illustrate the distributed algorithm for solving Problem~\eqref{MainProblem}. 
 We first illustrate the main steps of the algorithm  and then provide the main convergence results. 
 
\noindent \rule{\columnwidth}{2.5pt}

\noindent \textbf{DIST-OPT:  Distributed  Optimal Voltage Control}

\vspace{-0.2cm}

\noindent \rule{\columnwidth}{2.5pt}

\begin{enumerate}[\bf STEP 1]
 
\item  \textbf{Initialization:} Set $t=0$,  $z_i\re(t)=z_i\im(t)=\un{\lambda}_i(0)=\bar{\lambda}_i(0)=\alpha_i(0)=\beta_i(0)=0$ for $i\in \mathcal{N}$.

\item \textbf{Local Control:} Each bus $i\in \mathcal{N}$ 
 injects into the grid the active and  reactive power 
  \begin{align*}
     p_i(t) = \left[\frac{1}{a_i\re }\left( z_i\re(t)-b_i\re \right) \right]_{\underline{p}_i}^{\bar{p}_i}\\
     q_i(t) = \left[\frac{1}{a_i\im }\left( z_i\im(t)-b_i\im \right) \right]_{\underline{q}_i}^{\bar{q}_i}     
  \end{align*}

  \item \textbf{Local Measurement:} Each bus $i\in \mathcal{N}$ measures the voltage magnitude 
 \begin{align*}
    v_i(t)=v_i(p(t),q(t)) 
 \end{align*} 
  and then updates 
%
  \begin{subequations} \label{eq:LamUp}
  \begin{align}
     \un{\lambda}_i(t+1) =& \lceil\un{\lambda}_i(t)+\gamma(\un{v}_i-v_i(t))\rceil_+ \label{eq:DLamUp1}\\ 
     \bar{\lambda}_i(t+1) =& \lceil \bar{\lambda}_i(t)+\gamma(v_i(t)-\bar{v}_i)\rceil_+ \label{eq:DLamUp2} \\
     \lambda_i(t+1) =&\un{\lambda}_i(t+1)-\bar{\lambda}_i(t+1),
  \end{align}
  \end{subequations}
 where $\gamma>0$ is a step-size parameter.
  
   \item \textbf{Communication:} Each bus $i\in \mathcal{N}$ sends the following information to its neighbours: 
\begin{itemize}
 \item   \textbf{If $i$ has a parent:} Send to parent $\sigma(i)$ the variable 
\begin{align}  \label{EQ:AlphaUpdate}
   \alpha_i(t+1)=\lambda_i(t+1)+\sum_{j\in \mathcal{C}_i} \hat{\alpha}_j(t), 
\end{align}
  where $\mathcal{C}_i$ is the set of the children of node $i$. The parent $\sigma(i)$ receives the possibly delayed version
  $$ \hat{\alpha}_i(t+1)=  \alpha_i(t+1-\tau_{i\sigma(i)}(t)). $$  
 \item   \textbf{If $i$ has a child:} Send to  each child $j\in \mathcal{C}_i$ the variable 
 $$ \beta_j(t+1)=(\beta_j\re(t+1),\beta_j\im(t+1))$$
 where\footnote{If node $i$ has no parent then set $\hat{\beta}\re_{i}(t)=\hat{\beta}\im_{i}(t)=0$.} 
\begin{align}
  \beta_j\re(t+1)= & R_{ii}\left(\lambda_i(t+1)+ \sum_{r\in \mathcal{C}_i\setminus \{j\}}   \hat{\alpha}_r(t) \right) \notag \\ 
                            & + \hat{\beta}_{i}\re(t). \\
  \beta_j\im(t+1)= & X_{ii}\left(\lambda_i(t+1)+ \sum_{r\in \mathcal{C}_i\setminus \{j\}}   \hat{\alpha}_r(t) \right) \notag \\ 
                            & + \hat{\beta}_{i}\im(t).  \label{EQ:BetaUpdate}
\end{align}
    Each child $j$ receives the possibly delayed version
   \begin{align*}
          \hat{\beta}_j\re(t+1)&=  \beta_j\re(t+1-\tau_{ij}(t)),\\
          \hat{\beta}_j\im(t+1)&=  \beta_j\im(t+1-\tau_{ij}(t)).
  \end{align*}
\end{itemize}

  \item \textbf{Local  Computation:} Each bus $i\in \mathcal{N}$ updates
  \begin{align} 
        z_i\re (t+1)=&R_{ii}\bigg(\lambda_i(t+1) \notag
        {+}\sum_{j\in \mathcal{C}_i} \hat\alpha_j(t+1) \bigg)  \\&+\hat\beta_{i}\re(t+1). \\
       z_i\im(t+1)=&X_{ii}\bigg(\lambda_i(t+1) \notag
        {+}\sum_{j\in \mathcal{C}_i} \hat\alpha_j(t+1) \bigg)  \\&+\hat\beta_{i}\im(t+1). \label{eq:MainZ}
  \end{align}

\item \textbf{Update Iteration Index:} $t=t+1$. 

\end{enumerate}

\noindent \rule{\columnwidth}{2.5pt}

 Note that all the variables besides voltage $v$ and the active/reactive powers $p$ and $q$ are axillary variables.
  As illustrated in the next section, they are related to  the dual variables of problem~\eqref{MainProblem}. 
 In \textbf{STEP 1} of the algorithm each bus initializes its parameters.  For simplicity of presentation, all parameters are initialized at $t=0$. 
 In \textbf{STEP 2} each bus $i$ injects active and reactive power into the system based on the available information in $z_i\re(t)$ and $z_i\im(t)$. 
 In \textbf{STEP 3} each bus $i$ takes a local measurement of the voltage of $v_i(p(t),q(t))$ corresponding to the  active and reactive power injections  $p(t)$ and $q(t)$.
 Moreover, based on these measurements  bus $i$ also updates the parameters $\un{\lambda}_i$, $\bar{\lambda}_i$, and ${\lambda}_i$ according to~\eqref{eq:LamUp}.\footnote{We show in Section~\ref{Sec:AI} that $\un{\lambda}_i$, $\bar{\lambda}_i$ are prices (or the dual variables) for violating the voltage constraint $\un{v}\leq v(q)\leq \bar{v}$.} 
  In \textbf{STEP 4} each bus $i$ communicates the parameter $\alpha_i(t+1)$ to their parent bus (cf. Equation~\eqref{EQ:AlphaUpdate}) and $\beta_j(t+1)$ to each of their children buses $j\in \mathcal{C}_i$. 
    In \textbf{STEP 5}, each bus $i$ updates its variable $z_i(t+1)$ based on the local information $\lambda_i(t+1)$ and $\alpha_j(t+1)$ received from each of its child's  $j\in \mathcal{C}_i$ and $\beta_i(t+1)$ received from its parent. 
    Note that the information in  $\alpha_j(t+1)$ and  $\beta_i(t+1)$ received by bus $i$ delayed by $\tau_{ji}$ and $\tau_{\sigma_ii}$, respectively.

   It should be highlighted the computation done at each iteration by each bus consists of only a few binary operations per iteration and takes only a few microseconds. 
  In particular, if we count  the number of binary operations performed at each  iteration then we find that 
  \begin{itemize}
    \item in \textbf{step 2)} each bus performs  $1$ subtraction, $1$ division, and $1$ projection,
    \item in \textbf{step 3)} each bus performs $2$ additions, $3$ subtractions, $2$ multiplications, and $2$ projections,
     \item in  \textbf{step 4)} each bus performs at most $3c+2$ additions and 2 multiplications, where $c$ is the maximal number of children of a node, i.e., $c=\max_{i=1,\ldots,N} |\mathcal{C}_i|$ where $\mathcal{C}_i$ is the set of children of node $i$ and $|\mathcal{C}|$ denotes the cardinality of the set $\mathcal{C}$,
      \item  in  \textbf{step 5)} each bus performs at most $2c+4$ additions and 2 multiplications.
  \end{itemize}
  That is at most $22+5c$ binary operations, where $c$ is typically small, e.g., $c=4$ for the test network we use in Section~\ref{Sec:Simulation}. 
  This  computation takes few microseconds on modern processors, as we report in Section~\ref{Sec:Simulation}.
  

\subsection{Main Results}
  We illustrate the performance of the algorithm on the full nonlinear power flow model in Section~\ref{Sec:Simulation}. 
  Due to the high nonlinearities of the AC power flows it is generally difficult to prove the convergence of voltage control algorithms using the  full AC model.
  However, we prove the algorithms convergence assuming the linear  relationship in Equation~\eqref{eq:PhysicalRelationship-1} between $v$, $p$, and $q$. 
  \begin{theorem}\label{MainTheorem}
   Suppose that 
   \begin{equation} \label{Eq:linearizedV}
      v(p,q)=Rp+Xq+\vec{1} v_0
   \end{equation}
   and that there exists $p,q\in\R^N$ such that (Slater's condition):
   \begin{equation} \label{Eq:Slater}
      \pu < p  < \pb,~~~~ \qu < \q  < \qb,~~ \text{ and } ~~\vu < \v(\q)  < \vb, 
   \end{equation}
   and that the communication delays are bounded by $\tau_{\max}$, i.e., $\tau_{ij}(t)\leq \tau_{\max}$ for all $i,j,t\in \N$.
   Let the  step-size $\gamma$ be chosen from the interval  
   \begin{equation}\label{eq:MainStepSize}
        \gamma\in \left(0, \frac{2 }{(1+((\tau_{\max}+1)d+1) \sqrt{N})L}   \right),
   \end{equation}
   where $d=\max_{i,j\in \mathcal{N}} \dist(i,j)$ is the diameter of the network and
   $$L=2\left(\frac{||R||^2+||X||^2}{a_{\min}}\right).$$
   Then the following holds
  $$\lim_{t\rightarrow \infty} (p(t),q(t))=(p^{\star},q^{\star}),$$
   where $(p^{\star},q^{\star})$ is the optimal solution to Problem~\eqref{MainProblem}.
 \end{theorem}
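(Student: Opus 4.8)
The plan is to prove the theorem by showing that DIST-OPT is exactly a \emph{delayed projected gradient ascent} on the dual of Problem~\eqref{MainProblem}, and then running a Lyapunov descent argument for delayed gradient methods. First I would dualize only the voltage constraints $\un{v}\le v(p,q)\le\vb$, keeping the box constraints explicit, assigning multipliers $\un{\lambda},\bar{\lambda}\ge 0$. Using the symmetry and nonnegativity of $\Rm,\X$ together with $\lambda=\un{\lambda}-\bar{\lambda}$, the Lagrangian separates as $C\re(p)-\langle \Rm\lambda,p\rangle$ plus $C\im(q)-\langle\X\lambda,q\rangle$ plus terms independent of $(p,q)$. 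Minimizing each strongly convex box-constrained quadratic gives precisely the STEP~2 update $p_i=[\,(z_i\re-b_i\re)/a_i\re\,]_{\un{p}_i}^{\bar{p}_i}$, \emph{provided} $z_i\re=(\Rm\lambda)_i$ and $z_i\im=(\X\lambda)_i$. Strong convexity makes the inner minimizer unique, so by Danskin's theorem the dual function $D$ is differentiable with $\nabla_{\un{\lambda}}D=\un{v}-v(p^\star(\lambda),q^\star(\lambda))$ and $\nabla_{\bar{\lambda}}D=v(\cdot)-\bar{v}$; hence STEP~3 is projected gradient ascent. A direct estimate, using that the projection is $1$-Lipschitz, that $\lambda\mapsto\Rm\lambda$ is $\|\Rm\|$-Lipschitz, that $\|\lambda_1-\lambda_2\|\le\sqrt2\|\bla_1-\bla_2\|$, and that the gradient block $(\un{v}-v,\,v-\bar{v})$ contributes a further $\sqrt2$, shows $\nabla D$ is Lipschitz with constant exactly $L=2(\|\Rm\|^2+\|\X\|^2)/a_{\min}$, matching the statement. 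Slater's condition~\eqref{Eq:Slater} then yields strong duality, existence of a dual optimum $\bla^\star$, and uniqueness of the primal optimum $(p^\star,q^\star)=(p^\star(\lambda^\star),q^\star(\lambda^\star))$.

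The central and hardest step is establishing the \emph{equivalence}: that the $\alpha/\beta/z$ message passing reconstructs $\Rm\lambda$ and $\X\lambda$ but evaluated at delayed arguments. Expanding $(\Rm\lambda)_i=2\sum_{(h,k)\in\mathcal{P}_i}r_{hk}\sum_{j:(h,k)\in\mathcal{P}_j}\lambda_j$ reveals that it is assembled from subtree sums of $\lambda$. The up-sweep $\alpha_i=\lambda_i+\sum_{j\in\mathcal{C}_i}\hat{\alpha}_j$ computes exactly these subtree sums, while the down-sweep $\beta$ accumulates the resistance/reactance-weighted partial sums along the path toward each child, so that STEP~5 glues them into $z_i\re,z_i\im$. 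I would prove by induction on tree depth (and on $t$) that $z_i\re(t)=\sum_j \Rm_{ij}\lambda_j(t-\delta_{ij}(t))$ with $\delta_{ij}(t)\le(\tau_{\max}+1)\dist(i,j)\le(\tau_{\max}+1)d$, and similarly for $z_i\im$. The key observation is that \emph{even with no communication delay} the tree propagation itself makes the data stale by up to $d$ hops, and each hop adds at most $\tau_{\max}$; this is the origin of the factor $(\tau_{\max}+1)d+1$ in~\eqref{eq:MainStepSize}. Tracking which time index of $\lambda$ each message carries as it traverses the tree under arbitrary bounded delays is where essentially all the technical effort lies.

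With the equivalence in hand, the measured voltage is the \emph{exact} voltage of the true injections, $v_i(t)=(\Rm p(t)+\X q(t)+\mathbf{1}v_0)_i$, so STEP~3 reads $\bla(t+1)=[\bla(t)+\gamma g(t)]_+$ with $g(t)=\nabla D(\bla(t))+e(t)$, where $e(t)$ stems only from the delayed arguments inside $z\re,z\im$. Writing $\lambda_j(t-\delta)-\lambda_j(t)=-\sum_{s=t-\delta}^{t-1}(\lambda_j(s{+}1)-\lambda_j(s))$ and $\|\bla(s{+}1)-\bla(s)\|\le\gamma\|g(s)\|$, the Lipschitz bound on $\nabla D$ together with the norm conversions over the $N$ nodes and the delay window of length at most $(\tau_{\max}+1)d+1$ gives $\|e(t)\|\le((\tau_{\max}+1)d+1)\sqrt{N}\,L\,\max_{s}\gamma\|g(s)\|$. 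I would then expand $\|\bla(t{+}1)-\bla^\star\|^2$ for the concave $L$-smooth $D$: the exact-gradient part produces a descent term of order $(2-\gamma L)$, and the error term is absorbed provided $\gamma L\big(1+((\tau_{\max}+1)d+1)\sqrt{N}\big)<2$, which is exactly~\eqref{eq:MainStepSize}. A summable-Lyapunov/small-gain argument then forces $g(t)\to 0$ and $\bla(t)\to\bla^\star$.

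Finally, because the delays are bounded, $\bla(t)\to\bla^\star$ implies $z\re(t)\to\Rm\lambda^\star$ and $z\im(t)\to\X\lambda^\star$, so continuity of the primal-recovery map gives $(p(t),q(t))\to(p^\star,q^\star)$, the unique optimum guaranteed by strong convexity. The main obstacle throughout is the second step, the delayed-message equivalence; once it is in place, the third step is a careful but essentially routine Lyapunov analysis of delayed projected gradient ascent.
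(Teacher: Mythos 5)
Your proposal is correct and follows essentially the same route as the paper: dualizing only the voltage constraints, showing the $\alpha/\beta/z$ message passing reconstructs $R\lambda$ and $X\lambda$ at arguments delayed by at most $(\tau_{\max}+1)d$, and then analyzing the resulting inexact projected dual ascent with the same Lipschitz constant $L$ and the same step-size condition. The only (immaterial) difference is your choice of Lyapunov function, $\|\bla(t)-\bla^{\star}\|^2$ instead of the dual value $D(\bla(t))$ used in the paper's descent lemma; both yield summability of $\|\bla(t{+}1)-\bla(t)\|^2$ and hence vanishing gradient error, after which the primal recovery argument is identical.
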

 \begin{proof}
  See Appendix~\ref{Sec:Conv}. 
 \end{proof}
 The theorem shows that our distributed algorithm converges to the optimal solution to the problem~\eqref{MainProblem} provided that the step-size is small enough. 
 Moreover, the convergence is maintained even if the communication from neighboring buses is delayed.  
 This means that each bus can asynchronously update its active and reactive powers based only on local measurements without waiting for communication from other nodes, as long as the time between communications is bounded.  
 We note that theoretical step-size in Equation~\eqref{eq:MainStepSize}. We show in the simulations in Section~\ref{Sec:Simulation} that much larger step-sizes can be used.
We now illustrate the intuition into why the algorithm works.

\subsection{Extension to General Cost and Constraint}\label{subsec:general}
	As 
will be	illustrated in the next section, \textbf{STEP 2} of the \textbf{DIST-OPT}  algorithm can be equivalently written as
	\begin{align}
	(p_i(t),q_i(t)) =& \arg\min_{p_i,q_i} C_i\re(p_i) +C_i\im(q_i)  - z_i\re(t) p_i -  z_i\im(t) q_i \nonumber\\
	\text{s.t.}\quad  &\un{p}_i\leq p_i\leq \bar{p}_i \label{step2_interpretation}\\
& \un{q}_i\leq q_i\leq \bar{q}_i \nonumber
	\end{align} 
	where here $C_i\re(\cdot)$ and $C_i\im(\cdot)$ are the quadratic functions defined in \eqref{MainProblem}, \eqref{eq:c_i_p}, and \eqref{eq:c_i_q}.
	In other words, $p_i(t)$ and $q_i(t)$ are in fact the solution of a local optimization problem, where the constraint corresponds to the local active/reactive power capacity constraint in \eqref{MainProblem}, and the cost corresponds to the local cost $C_i\re$ and $C_i\im$ in \eqref{MainProblem}, plus a linear term depending on the multipliers $z_i\re(t)$, $z_i\im(t)$. Here the multipliers $z_i\re(t)$, $z_i\im(t)$ capture the voltage constraint violation at time $t$, and their role in \eqref{step2_interpretation} can be understood as forcing the control action $(p_i(t), q_i(t))$ to respond to the voltage violation. As Theorem~\ref{MainTheorem} shows, the algorithm will converge to the solution of \eqref{MainProblem}.

	Given this interpretation, our algorithm can be extended to handle more general local cost functions and constraints. In particular, instead of the optimization problem \eqref{MainProblem}, we consider the following more general problem, 
	\begin{equation} \label{MainProblem_new} 
	\begin{aligned}
	& \underset{p,q\in \R^N}{\text{minimize}}
	& & \sum_{i=1}^N \tilde{C}_i(p_i,q_i) \\
	& \text{subject to}
	&&  \vu \leq \v(p,q)  \leq \vb  \\
	&&&  \mathbf{g}_i(p_i,q_i)\leq \mathbf{0} 
	\end{aligned}
	\end{equation}
	where  $\tilde{C}_i(p_i,q_i)$ is a strictly convex function and $\mathbf{g}_i(p_i,q_i)$ is convex. 
	Here the bold $\mathbf{g}_i$ means that it is a vector, and can include more than one constraint. As an example, this $\mathbf{g}_i$ could include the apparent power limit constraint,
	$$p_i^2 + q_i^2 \leq \bar{s}_{i}^2$$
	in addition to the box constraint $\underline{p}_i \leq p_i\leq\bar{p}_i$ and $\underline{q}_i\leq q_i \leq\bar{q}_i$.

	  In light of the interpretation \eqref{step2_interpretation}, we can actually derive a generalized version of our algorithm to solve the more generalized problem \eqref{MainProblem_new}. To do this, we simply replace  \textbf{STEP 2} in the \textbf{DIST-OPT}  algorithm with the following step, 
    \begin{align}
	  (p_i(t),q_i(t)) =& \arg\min_{p_i,q_i} \tilde{C}_i(p_i,q_i) - z_i\re(t) p_i - z_i\im(t) q_i \label{eq:general_constr} \\
	  \text{s.t.}\quad  & \mathbf{g}_i(p_i,q_i)\leq \mathbf{0}\nonumber
    \end{align} 
which is essentially \eqref{step2_interpretation} with the local cost and local constraint replaced with the generalized cost $\tilde{C}_i(p_i,q_i)$ and the generalized constraint $\mathbf{g}_i(p_i,q_i)\leq \mathbf{0}$.  Compared with the original  \textbf{STEP 2}, \eqref{eq:general_constr} is a local optimization problem that might not have a simple closed form solution as the original  \textbf{STEP 2}. However, in the case that $\tilde{C}_i$ are convex quadratic functions, and $\mathbf{g}_i$ is the apparent power constraint, \eqref{eq:general_constr} is a simple convex Quadratic Constrained Quadratic Programming (QCQP) problem with two variables and can be solved efficiently~\cite{convex_boyd}. In the simulation section, we test this more generalized form of our algorithm to verify its validity.

\section{Algorithm Intuition} \label{Sec:AI}



  We now give intuition into the algorithm and explain why it solves Problem~\eqref{MainProblem}.  
  %
  A key insight is that our \textbf{DIST-OPT} algorithm is equivalent to  asynchronous dual decomposition methods similar to~\cite{Low_1999,chiang2007layering} where the primal problem is solved using old dual variables.  
  This is in no way obvious. 
  For example, if we would directly apply similar dual decomposition approaches as in~\cite{chiang2007layering}  to~\eqref{MainProblem} then we get an algorithm where {every node needs to communicate  to every other node in the network.}  
%
     However, in our algorithm the nodes  communicate  only to their neighbours in the network. 
      We achieve this by introducing the axillary variables $z^R_i$, $z^Q$, $\beta^R$, $\beta^Q$, and $\alpha$, and their updates in \textbf{Step 4} and \textbf{Step 5} of the algorithm. 
      We designing these updates by exploiting the special structures of the voltage control problem.
    {These axillary variables and their updates are novel in our algorithm and make our algorithm fully distributed,  only neighbor to neighbor  communication is needed.} We illustrate this insight in more detail now.

We need to start by introducing the dual of the optimization problem in Equation~\eqref{MainProblem}, which is given by
\begin{equation} \label{dual_problem}
   \begin{aligned}
    & \underset{\bla=(\BLLambda,\BULambda)}{\text{maximize}}
    & & D(\bla) := \min_{(p,q) \in [\un{p},\bar{p}] \times [\un{q},\bar{q}]} \mathcal{L}(p,q,\bla) \\
    & \text{subject to}
    & &   \bla \in\R_+^{2N} ,
  \end{aligned}
\end{equation}
where $\BLLambda$ and $\BULambda$ are, respectively, the dual variable associated to the voltage lower and upper bounds and $D:\R^{2N}\rightarrow \R$ is the dual function and $\mathcal{L}(\cdot)$ is the Lagrangian function defined as 
 \begin{equation}\label{EQ:LAG}
      \mathcal{L}(p,q,\bla)=  C\re(p)+C\im(q) + \BLLambda\tran ( \vu-\v(p,q)) + \BULambda\tran ( \v(p,q)- \vb),
 \end{equation}
 where $\bla=(\un{\lambda},\bar{\lambda})\in \R^N\times \R^N$, see Chapter 5 in~\cite{nonlinear_bertsekas} for details.  
 We have the following result proved in Appendix~\ref{APP:Lemma:DualMain}.
 \begin{lemma} \label{Lemma:DualMain}
     The dual gradient is
\begin{align} \label{eq:dualGrad}
  \nabla  D(\bla) = \left[\begin{array}{c}   \vu-\v(p(\bla),q(\bla)) \\ \v(p(\bla),q(\bla) )-\vb  \end{array}\right]
\end{align}
 where   
  \begin{subequations}  \label{eq:LocalProblem-c}
  \begin{align}
  p(\bla)     =&    \left[ {\Lambda_{\texttt{P}}}^{-1}R( \BLLambda -\BULambda) -b\re \right]_{\pu}^{\pb}, \\ 
   \q(\bla)  =&    \left[ {\Lambda_{\texttt{Q}}}^{-1}{X}( \BLLambda -\BULambda) -b\im \right]_{\qu}^{\qb}, 
\end{align} 
\end{subequations}
  and 
  $$ \Lambda_{\texttt{P}}=\texttt{diag}(a_1\re,\ldots,a_N\re)~\text{ and }~\Lambda_{\texttt{Q}}=\texttt{diag}(a_1\im,\ldots,a_N\im).$$ 
  Moreover, $\nabla D(\bla)$ is $L$-Lipschitz continuous where 
  $$L=2\left(\frac{||R||^2+||X||^2}{a_{\min}}\right).$$
 \end{lemma}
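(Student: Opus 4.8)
The plan is to treat $D$ as the dual function obtained by dualizing only the two voltage inequalities in~\eqref{MainProblem} while keeping the box constraints inside the inner minimization, and to exploit the fact that under the linear model~\eqref{Eq:linearizedV} the Lagrangian~\eqref{EQ:LAG} is a strongly convex, coordinate-separable quadratic in $(p,q)$. First I would substitute $v(p,q)=Rp+Xq+\vec{1}v_0$ into~\eqref{EQ:LAG} and collect the terms that are linear in $p$ and in $q$. Writing $\lambda=\BLLambda-\BULambda$ and using that $R$ and $X$ are symmetric, the parts of $\mathcal{L}$ depending on $p$ and on $q$ decouple completely and each splits across buses. The $i$-th scalar subproblem is then the minimization of $\tfrac{a_i\re}{2}p_i^2 + (b_i\re - (R\lambda)_i)p_i$ over $[\un{p}_i,\bar{p}_i]$, whose unique minimizer (the cost is strongly convex since $a_i>0$) is the clipping of the unconstrained optimum $\tfrac{1}{a_i\re}((R\lambda)_i-b_i\re)$ onto the box. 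Stacking coordinates and doing the same for $q$ yields~\eqref{eq:LocalProblem-c}.

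Second, to obtain the gradient~\eqref{eq:dualGrad} I would invoke Danskin's theorem. Strong convexity of the inner objective guarantees that the minimizer $(p(\bla),q(\bla))$ is unique and continuous in $\bla$, so $D$ is differentiable and its gradient equals the partial gradient of $\mathcal{L}$ with respect to $\bla$ evaluated at the minimizer. Since $\mathcal{L}$ is affine in $\bla$ for fixed $(p,q)$, this partial gradient is read off directly: $\nabla_{\BLLambda}\mathcal{L}=\vu-v(p,q)$ and $\nabla_{\BULambda}\mathcal{L}=v(p,q)-\vb$, which is precisely~\eqref{eq:dualGrad}.

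Third, for the Lipschitz bound I would set $V(\bla)=v(p(\bla),q(\bla))=Rp(\bla)+Xq(\bla)+\vec{1}v_0$ and note from~\eqref{eq:dualGrad} that the two blocks of $\nabla D$ are $\vu-V(\bla)$ and $V(\bla)-\vb$, so for any $\bla_1,\bla_2$ one has $\|\nabla D(\bla_1)-\nabla D(\bla_2)\|=\sqrt{2}\,\|V(\bla_1)-V(\bla_2)\|$. Then I would chain three elementary estimates: (i) $\|V(\bla_1)-V(\bla_2)\|\le\|R\|\,\|p(\bla_1)-p(\bla_2)\|+\|X\|\,\|q(\bla_1)-q(\bla_2)\|$; (ii) the box projection is nonexpansive and $\|\Lambda_{\texttt{P}}^{-1}\|\le 1/a_{\min}$, giving $\|p(\bla_1)-p(\bla_2)\|\le (\|R\|/a_{\min})\,\|\lambda_1-\lambda_2\|$ and similarly for $q$ with $\|X\|$; and (iii) $\|\lambda_1-\lambda_2\|\le\sqrt{2}\,\|\bla_1-\bla_2\|$ because $\lambda=\BLLambda-\BULambda$. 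Multiplying these together produces the two factors of $\sqrt{2}$ that combine into the stated constant $L=2(\|R\|^2+\|X\|^2)/a_{\min}$.

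The routine parts are the algebra of the first step and the norm bookkeeping of the third. The step that deserves the most care is the application of Danskin's theorem: one must verify that the unique minimizer makes $D$ continuously differentiable even though the inner minimizer $(p(\bla),q(\bla))$ is only piecewise smooth (due to the clipping), which is exactly why the strong convexity assumption $a_i>0$, and the resulting uniqueness of $(p(\bla),q(\bla))$, are essential rather than cosmetic.
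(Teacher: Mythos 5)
Your proposal is correct and follows essentially the same route as the paper: a Danskin-type argument (the paper cites Proposition~6.1.1 of Bertsekas) for \eqref{eq:dualGrad}, the observation that the diagonal structure of $\Lambda_{\texttt{P}},\Lambda_{\texttt{Q}}$ makes the box-constrained minimizer equal to the clipped unconstrained minimizer, and the same chain of nonexpansiveness and norm estimates (with the two factors of $\sqrt{2}$) for the Lipschitz constant. The only cosmetic difference is that you justify the projection formula by coordinate separability of the inner quadratic, whereas the paper verifies the variational inequality $\langle \nabla\mathcal{L}, (p-p(\bla),q-q(\bla))\rangle\geq 0$ case by case (interior, lower bound, upper bound); these are the same underlying fact.
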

 \begin{remark}
    Note that $(\p(\bla),q(\bla))$ in  Equation~\eqref{eq:LocalProblem-c} is the projection of the unconstrained minimizer of $L(p,q,\bla)$ to the box constraint $[\un{p},\bar{p}]\times[\un{q},\bar{q}]$. 
     In general, the optimal solution of a constrained optimization problem cannot be obtained by projecting the unconstrained optimizer to the constraint set, even for quadratic problems. 
    However, this works here because of the special structures of the matrices in our problem, see  Appendix~\ref{APP:Lemma:DualMain} for the details.
 \end{remark}
  From the lemma we can derive a standard dual decent  algorithm by setting  
 \begin{subequations}  \label{eq:primal_update_Cent}  
  \begin{align}
        p_i(t)=& \left[\frac{1}{a_i\re}\left( \sum_{j=1}^N R_{ij} \lambda_i(t)  -b_i\re \right)\right]_{\pu_i}^{\pb_i},  \\
        q_i(t)=& \left[\frac{1}{a_i\im}\left( \sum_{j=1}^N X_{ij} \lambda_i(t)  -b_i\im \right)\right]_{\un{q}_i}^{\bar{q}_i}, 
  \end{align}
 \end{subequations}    
  where $\lambda_i(t)= \BLLambda_i(t) -\BULambda_i(t)$ and $\bla=(\BLLambda_i,\BULambda_i)$ is updated according to Equation~\eqref{eq:LamUp}, which is equivalent to the following gradient update
  \begin{align} 
     \bla(t{+}1) =&   \lceil \bla(t)+ \gamma   \nabla D(\bla(t)) \rceil_+. \label{eq:dual descent}  
  \end{align} 
 This algorithm is guaranteed to converge to the optimal solution (provided that $\gamma>0$ is small enough) since it is simply gradient ascent  for maximizing the dual  $D(\cdot)$.  
 The downside of this update is that to calculate $q_i(t+1)$ in Equation~\eqref{eq:primal_update_Cent} bus $i$ needs information form every other bus in the network.  
 This is countered in the \textbf{DIST-OPT} algorithm where only neighbour to neighbour communication is needed.  
 This is obtained by replacing the sums in Equation~\eqref{eq:primal_update_Cent}  by 
 $$z_i\re (t)\approx \sum_{j=1}^N R_{ij} \lambda_i(t)~~\text{ and }~~ z_i\im (t)\approx \sum_{j=1}^N X_{ij} \lambda_i(t)$$
 which can be computed in distributed fashion, see Equation~\eqref{eq:MainZ}. 
 In particular,  $z_i\re(t)$ and $z_i\im(t)$ are delayed versions of $\sum_{j=1}^N R_{ij} (\lambda_i(t))$ and $\sum_{j=1}^N X_{ij} (\lambda_i(t))$, respectively, as shown in the following lemma
%
   (proved in the extended version~\cite{Magnusson2019Extended})
\begin{lemma} \label{lemma:delayEqu}
  If $\tau_{ij}(t)=0$ for all $i,j\in\mathcal{N}$ and $t\in \N$, then we have for all $i\in \mathcal{N}$ that
  
  \begin{subequations}  \label{Eq:Delay_Z}
  \begin{align}
        z_i\re(t)=&\sum_{j=1 }^N R_{ij} \lambda_j(t-d_{ij}) ~~\text{ and }\\
        z_i\im(t)=&\sum_{j=1 }^N X_{ij} \lambda_j(t-d_{ij})
  \end{align}
  \end{subequations}
  where $\lambda_j(t)=0$ for $t<0$ and
 \begin{equation}
      d_{ij} = \begin{cases}  0  & \text{if } \dist(i,j)\leq 1 \\
                                            \dist(i,j)-1 & \text{otherwise.}
                      \end{cases}
 \end{equation} 
 If $\tau_{ij}(t)\leq \tau_{\max}$ for all $i,j\in\mathcal{N}$ and $t\in \N$ then
   \begin{subequations}  \label{Eq:Delay_Z2}
   \begin{align} 
        z_i\re(t)=\sum_{j=1 }^N R_{ij} \lambda_j(t-\bar{\tau}_{ij}(t)), \\
        z_i\im(t)=\sum_{j=1 }^N X_{ij} \lambda_j(t-\bar{\tau}_{ij}(t)),         
  \end{align}
 \end{subequations}   
  where $\bar{\tau}_{ij}(t)\leq (\tau_{\max}+1)d$.
\end{lemma}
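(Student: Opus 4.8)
The plan is to show that the auxiliary variables $z_i\re,z_i\im$ produced by the local recursions in \textbf{STEP 4}--\textbf{STEP 5} reproduce exactly the centralized quantities $\sum_j R_{ij}\lambda_j$ and $\sum_j X_{ij}\lambda_j$, up to delays, by exploiting the tree structure of the radial network. Write $\mathcal{T}_i\subseteq\mathcal{N}$ for the set of descendants of bus $i$ (including $i$ itself) and $a_{ij}=\mathrm{lca}(i,j)$ for the lowest common ancestor of $i$ and $j$. Since $\mathcal{P}_i\cap\mathcal{P}_j=\mathcal{P}_{a_{ij}}$ in a tree, the definition of the $R$ matrix gives two structural identities: $R_{ij}=R_{ii}$ whenever $j\in\mathcal{T}_i$, and $R_{ij}=R_{a_{ij}a_{ij}}$ in general; the same holds for $X$. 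These let me split the target sum $\sum_j R_{ij}\lambda_j$ into the contribution of the subtree $\mathcal{T}_i$, all sharing the common factor $R_{ii}$, and the contribution of $\mathcal{N}\setminus\mathcal{T}_i$, grouped by their LCA with $i$. The variable $\alpha$ will account for the former and $\beta\re,\beta\im$ for the latter.

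First I would treat the synchronous case $\tau_{ij}(t)\equiv 0$, so $\hat\alpha=\alpha$ and $\hat\beta=\beta$. A bottom-up induction on the subtree, using \eqref{EQ:AlphaUpdate}, establishes
$$\alpha_i(t)=\sum_{k\in\mathcal{T}_i}\lambda_k(t-\dist(i,k)),$$
with base case a leaf ($\mathcal{C}_i=\emptyset$) and inductive step following from $\dist(i,k)=\dist(j,k)+1$ for $k\in\mathcal{T}_j$, $j\in\mathcal{C}_i$, which turns the one-step recursion delay into an increment of the distance. Substituting into the first term $R_{ii}(\lambda_i(t)+\sum_{j\in\mathcal{C}_i}\alpha_j(t))$ of \eqref{eq:MainZ} produces exactly $R_{ii}\sum_{k\in\mathcal{T}_i}\lambda_k(t-d_{ik})$: using $\alpha_j(t)$ rather than $\alpha_j(t-1)$ collapses $\dist(i,k)$ to $d_{ik}=\max\{0,\dist(i,k)-1\}$, so the piecewise definition of $d_{ij}$ falls out naturally.

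Next I would handle $\beta\re,\beta\im$ by a top-down induction on depth from the root, using \eqref{EQ:BetaUpdate}. The claim is
$$\beta_i\re(t)=\sum_{j\in\mathcal{N}\setminus\mathcal{T}_i}R_{a_{ij}a_{ij}}\,\lambda_j(t-d_{ij}),$$
and symmetrically for $\beta_i\im$ with $X$. The base case is a child $i$ of the substation, where $R_{00}=0$ and the footnote convention $\beta_0\re=0$ force $\beta_i\re=0$, consistent with the empty-resistance LCA. In the inductive step with $p=\sigma(i)$, the term $R_{pp}\lambda_p(t)$ supplies the parent ($a_{ip}=p$, delay $d_{ip}=0$), the term $R_{pp}\sum_{r\in\mathcal{C}_p\setminus\{i\}}\alpha_r(t)$ supplies the sibling subtrees ($a_{ik}=p$ for $k$ in such a subtree), and the recursive term $\beta_p\re(t)$ supplies everything above $p$ via $a_{ij}=a_{pj}$ for $j\notin\mathcal{T}_p$. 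Combining the $\alpha$ and $\beta$ parts in \eqref{eq:MainZ} then yields $z_i\re(t)=\sum_j R_{ij}\lambda_j(t-d_{ij})$, proving \eqref{Eq:Delay_Z}. The delicate point is the delay bookkeeping: I must verify in each branch that $1+d_{pj}=d_{ij}$ for $j\notin\mathcal{T}_p$ and $1+\dist(r,k)=d_{ik}$ for $k$ in a sibling subtree, which needs the case split in the definition of $d_{ij}$; this is where an error would most likely hide.

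Finally, for the delayed case $\tau_{ij}(t)\leq\tau_{\max}$, I would rerun both inductions verbatim, now with $\hat\alpha_i(t)=\alpha_i(t-\tau_{i\sigma(i)}(\cdot))$ and $\hat\beta$ likewise, so the exact delays $\dist$ and $d_{ij}$ are replaced by time-varying delays $\bar\tau_{ij}(t)$ defined by the same recursions, giving the closed forms \eqref{Eq:Delay_Z2}. Each traversal of a communication link composes one algorithmic step with a link delay $\leq\tau_{\max}$; since the contribution of $\lambda_j$ to $z_i\re(t)$ is realized along the path from $j$ up to $a_{ij}$ and down to $i$, comprising at most $\dist(i,j)\leq d$ hops, the accumulated delay satisfies $\bar\tau_{ij}(t)\leq(\tau_{\max}+1)\dist(i,j)\leq(\tau_{\max}+1)d$, which is the stated bound. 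The main obstacle here is not any single step but maintaining a consistent, composable description of the time-varying delays through the nested $\alpha$/$\beta$ recursions; the tree combinatorics is already fixed in the synchronous case.
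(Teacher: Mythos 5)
Your proposal is correct and follows essentially the same route as the paper's proof: a bottom-up induction giving $\alpha_i(t)=\sum_{k\in\mathcal{D}(i)}\lambda_k(t-\dist(i,k))$ (the paper's Claim~1), an unrolling of the $\beta$ recursion over ancestors and sibling subtrees (Claim~2), and the structural identity $R_{ij}=R_{kk}$ with $k$ the most recent common ancestor of $i$ and $j$ (Claim~3), combined via \eqref{eq:MainZ}. Your delay bookkeeping, including the collapse to $d_{ij}=\max\{0,\dist(i,j)-1\}$ and the bound $\bar\tau_{ij}(t)\leq(\tau_{\max}+1)d$, matches the paper's.
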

 The lemma shows that the \textbf{DIST-OPT} algorithm is equivalent to updating the dual variables $\bla=(\BLLambda_i,\BULambda_i)$ according to the recursion 
  \begin{align} 
    \bla(t{+}1) =&   \lceil \bla(t)+ \gamma   g(t) \rceil_+,\label{eq:delayed dual descent}  
 \end{align} 
 where $g(t)$ is an approximation of the dual gradient $\nabla D(\bla(t))$ using old $\lambda$ values solve~\eqref{eq:primal_update_Cent} (cf. Equation~\eqref{Eq:Delay_Z} and~\eqref{Eq:Delay_Z2}). In particular,
	\begin{align}
	(p(t),q(t)) =& \arg\min_{p,q} C\re(p) +C\im(q)  - z\re(t)\tran p -  z\im(t)\tran q \nonumber\\
	\text{s.t.}\quad  &\un{p}\leq p\leq \bar{p} \nonumber \\
& \un{q}\leq q\leq \bar{q} \nonumber
	\end{align}  
and  we have that
\begin{align} \label{eq:dualGradApprox}
    g(t) = \left[\begin{array}{c}   \vu-\v(p(t),q(t)) \\ \v(p(t),q(t) )-\vb  \end{array}\right].
\end{align}
 We  use this interpretation of the \textbf{DIST-OPT} algorithm to prove Theorem~\ref{MainTheorem}, see the extended version for details~\cite{Magnusson2019Extended}.

 \begin{figure}[!t]
	\centering
	\includegraphics[width=\columnwidth]{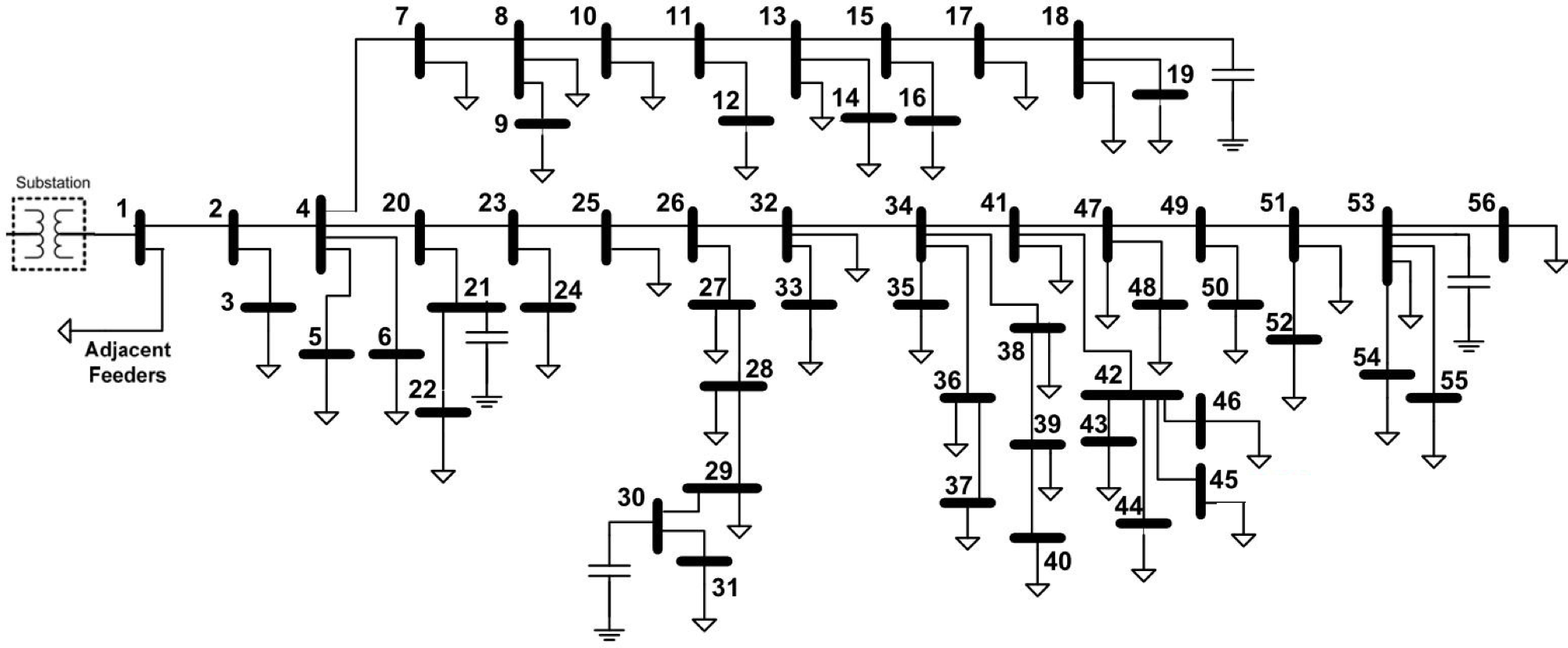}
	\caption{Schematic diagram of two SCE distribution systems. }
	\label{fig:circuit}
\end{figure}

   \begin{figure}
	\includegraphics[width=\columnwidth]{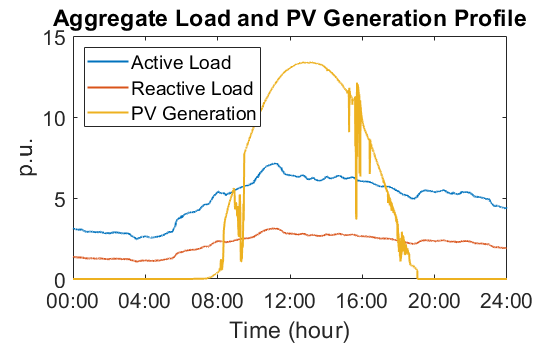}  
	\caption{Aggregated active load, reactive load and PV generation profile. } \label{fig:load}
\end{figure}

\section{Simulations} \label{Sec:Simulation}

\begin{figure*}
	\centering
	\begin{subfigure}[b]{0.32\textwidth}
		\includegraphics[width=\textwidth]{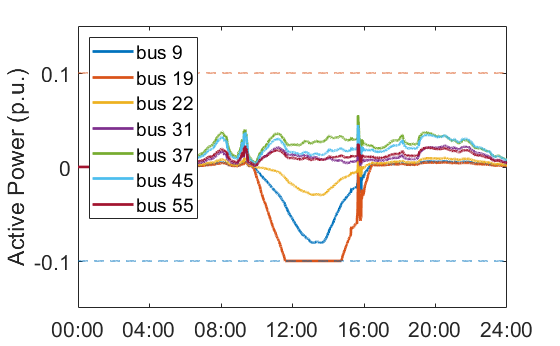}
		\caption{Active power injection.}
		\label{fig:dynamic_p}
	\end{subfigure}
	\begin{subfigure}[b]{0.32\textwidth}
		\includegraphics[width=\textwidth]{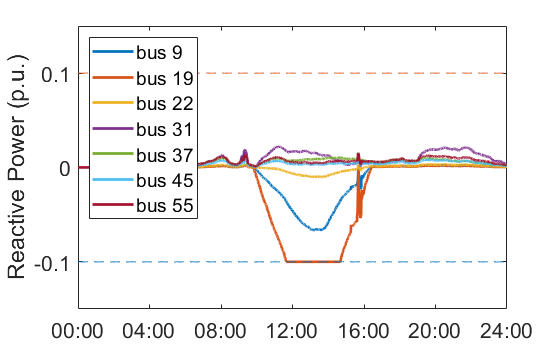}
		\caption{Reactive power injection.}
		\label{fig:dynamic_q}
	\end{subfigure}
	~
	\begin{subfigure}[b]{0.32\textwidth}
		\includegraphics[width=\textwidth]{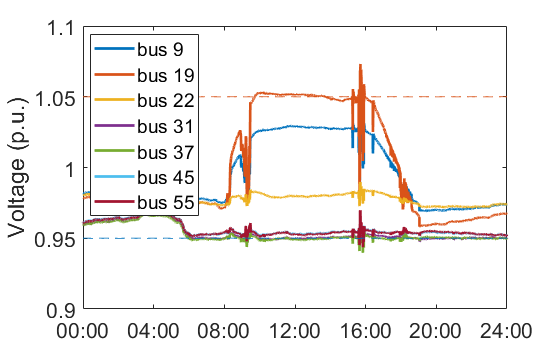}
		\caption{Voltage profile.}
		\label{fig:dynamic_v}
	\end{subfigure}
	\caption{Progress of the dynamic \textbf{DIST-OPT} algorithm when there are no time delays.}\label{fig:dynamic}
\end{figure*}


We evaluate our algorithm \textbf{DIST-OPT} on    the full nonlinear AC power flow model (\ref{eq:LinBranchFlow}), using Matpower \cite{5491276}. 
We do our experiments on the distribution circuit of South California Edison \cite{Farivar-2012-VVC-PES}.\footnote{See \cite{Farivar-2012-VVC-PES} for the network data including the line impedance, the peak MVA demand of the loads and the nameplate capacity of the shunt capacitors. }
All results are expressed as per-unit (p.u). 
The nominal voltage magnitude is $1 \text{p.u.}$ and the
 acceptable range is set as plus/minus 5\% of the nominate value. We divide the simulation into two parts. In the first part, we use a realistic load and PV generation data over a one-day period, and evaluate the performance of our algorithm.
 In particular, we evaluate its ability to keep voltage within the acceptable limits and its robustness against communication delays, measurement noise and modeling error. In the second part, we focus on the optimality of the proposed algorithm, i.e. how well the algorithm can minimize the optimization problem in Equation~\eqref{MainProblem}. Note that since now we are considering the full AC-nonlinear power flow in Equation~\eqref{eq:LinBranchFlow}, the optimization problem is nonconvex.

\subsection{Performance and Robustness under Time-Varying Load and PV Generation}\label{subsec:simu_timevarying}

\begin{figure*}
	\centering
	\begin{subfigure}[t]{0.32\textwidth}
		\includegraphics[width=\textwidth]{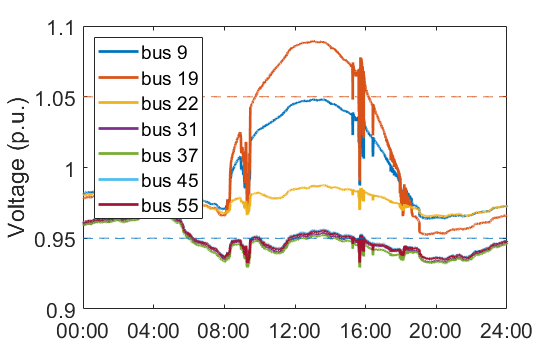}
		\caption{No control.}
		\label{fig:dynamic_v_withoutcontrol}
	\end{subfigure}
	\begin{subfigure}[t]{0.32\textwidth}
		\includegraphics[width=\textwidth]{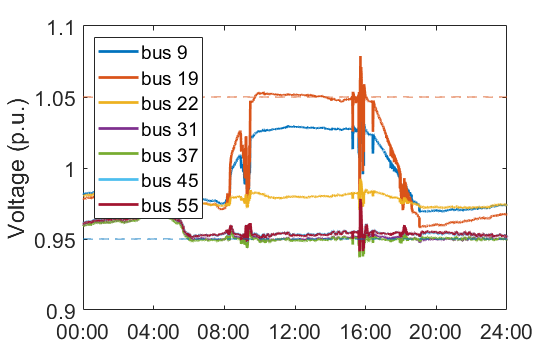}
		\caption{Fixed delay $\tau_{ij}(t)=5$.}
		\label{fig:robust_delay_fixed}
	\end{subfigure}
	~
	\begin{subfigure}[t]{0.32\textwidth}
		\includegraphics[width=\textwidth]{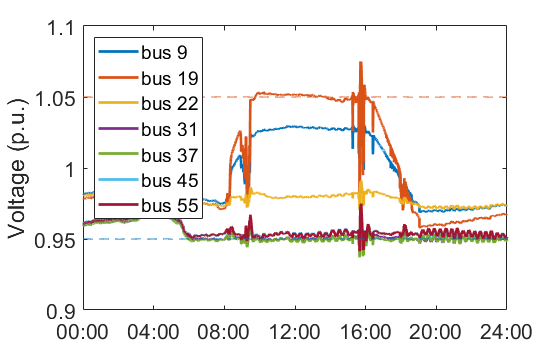}
		\caption{Random delay, $\tau_{\max}=15$.}
		\label{fig:robust_delay_random}
	\end{subfigure}
\\
	\begin{subfigure}[t]{0.32\textwidth}
		\includegraphics[width=\textwidth]{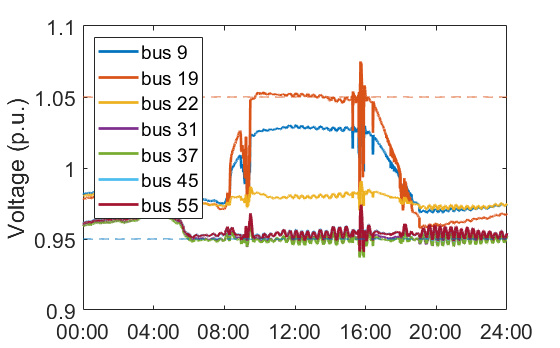}
		\caption{Intermittent communication.}
		\label{fig:robust_asyn}
	\end{subfigure}
	\begin{subfigure}[t]{0.32\textwidth}
		\includegraphics[width=\textwidth]{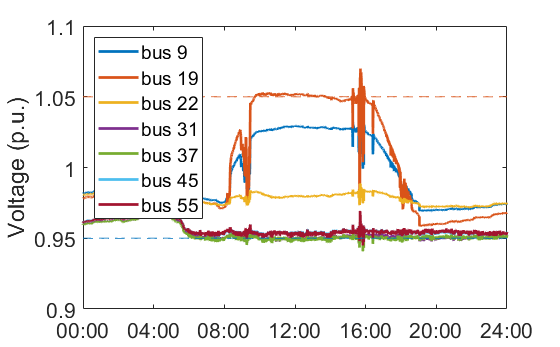}
		\caption{Measurement noise.}
		\label{fig:robust_noise}
	\end{subfigure}
	~
	\begin{subfigure}[t]{0.32\textwidth}
		\includegraphics[width=\textwidth]{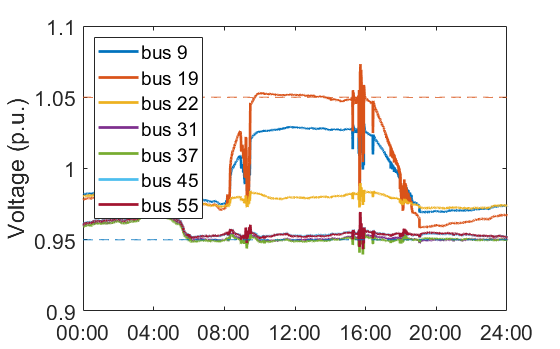}
		\caption{Modelling error.}
		\label{fig:robust_modelerr}
	\end{subfigure}
	\caption{Voltage profile of the  \textbf{DIST-OPT} algorithm under different conditions. }\label{fig:dynamic_v_various_2}
\end{figure*}

We consider the case where a subset of the buses have PV generation (bus 
2, 4, 7, 8, 9, 10, 11, 12, 13, 14, 15, 16, 19, 20, 23, 25, 26, 32). We use the load and PV generation profile  in \cite{bernstein2017real}. The time span of the data set is one day (24 hours), and the time resolution is $6$ seconds. We plot the total load and PV generation profile across the buses in Figure~\ref{fig:load}. 
We assume that there are control components at all the buses and those control components can supply or consume at most $0.1$ p.u. active power and reactive power (i.e. $\bar{p}_i=0.1$, $\underline{p}_i = -0.1$, $\bar{q}_i = 0.1, \underline{q}_i = -0.1$).  The parameter $a_i^P, a_i^Q$ is synthetic data in the range of $[1,2]$, and $b_i^P$ and $b_i^Q$ is set as $0$.   
 Consistent with the time resolution of the dataset, the buses perform one iteration of the  \textbf{DIST-OPT} algorithm every 6 seconds.


 Figure~\ref{fig:dynamic} depicts the progress of the algorithm when there are no communication delays. 
 The total serial running time of our algorithm is 5.383 seconds, running all 14400 iterations on all 55 buses. \footnote{All simulations are done using MATLAB 2018b on a HP Z640 Workstation with Intel Xeon E5-2620 v4 2.10GHz. It should be remarked that the total time of running the simulations was 264.696 seconds. However, 259.313 seconds were spent calculating the power flow through Matpower. This computation is not part of our algorithm. It is simply the output of the physics of the power system and is obtained from measurements when our algorithm is implemented in a real power system.}
 This means that computing 1 iteration takes roughly 7 microseconds on each bus.
 This is negligible  compared to the  6 second time resolution. 
 The resulting active injection and reactive power injection and voltage profile are shown in Figure~\ref{fig:dynamic_p}, \ref{fig:dynamic_q} and \ref{fig:dynamic_v}, respectively. For comparison, we also simulate the voltage profile when no control is applied in Figure~\ref{fig:dynamic_v_withoutcontrol}. 
 A comparison between Figure~\ref{fig:dynamic_v} and Figure~\ref{fig:dynamic_v_withoutcontrol} shows that our algorithm can maintain the voltages within the upper and lower limit over almost the whole day. 
 The only exception is around 16:00 where the voltages overshoot the feasible range for only a short period of time. 
 During this time the PV  generations are changing very  rapidly as can be seen from Figure~\ref{fig:load}.
 However, our algorithm drives the voltages back to the feasible range in only few iterations. 
%
%
 Further, Figure~\ref{fig:dynamic_p} and Figure~\ref{fig:dynamic_q} show that our algorithm does not violate the active and reactive capacity constraints at any time.
We next test the robustness of our algorithm against communication delay, measurement noise and modeling error. In these tests, we use the same simulation setting as that of Figure~\ref{fig:dynamic}.

\textit{Robustness against communication delays.} We test two cases with different types of communication delays. 
In Figure~\ref{fig:robust_delay_fixed}, we set the communication delays between each pair of buses $i$ and $j$ at different times to be a constant $\tau_{ij}(t) = 5$ (30 seconds). In Figure~\ref{fig:robust_delay_random}, the delays $\tau_{ij}(t)$ are drawn independently and uniformly from $[0,\tau_{max}]$, where $\tau_{max}$ is the maximum delay and is set as $15$ (90 seconds). It can be seen that Figure~ \ref{fig:robust_delay_fixed} has no significant difference from Figure~\ref{fig:dynamic_v}, while Figure \ref{fig:robust_delay_random} exhibits small oscillations but are still able to maintain the voltage within the acceptable range. A delay of $5$ iterations means $30$ seconds, and a delay of $15$ iterations means $90$ seconds. These show our algorithm is robust against large communication delays. 

\textit{Robustness against intermittent communication.} In Figure~\ref{fig:robust_asyn}
 we consider intermittent communication, where the nodes communicate only every $5$th iteration (with no communication delay). 
 When no communication occurs then the nodes update their control based on  the last received communicated information.  
 This means that the nodes communicates only once every $30$ seconds. 
 This reduces the communication by $80\%$ compared to communicating at every iteration. 
 %
%
Compared to Figure~\ref{fig:dynamic_v}, Figure~\ref{fig:robust_asyn} exhibits small oscillations but the voltage is still maintained within the acceptable range. These show that when implementing our algorithm, each node does not need to communicate at every iteration, and can simply communicate every a few iterations (e.g. $5$ iterations, 30 seconds as in Figure~\ref{fig:robust_asyn}), which greatly reduces the communication burden. 

\textit{Robustness againt measurement noise.} We test a case where the measurement is corrupted by Gaussian noise with stand error 0.01 p.u. The results are shown in Figure~\ref{fig:robust_noise}. Compared to \ref{fig:dynamic_v}, Figure~\ref{fig:robust_noise} exhibits some small oscillations but still are able to maintain the voltage within the acceptable range. 

\textit{Robustness against modeling error.} We test a case where the $X_{ii}$ and $R_{ii}$ used in the algorithm are inaccurate, and are drawn from $[0.8,1.2]$ of the true value. The results are shown in Figure~\ref{fig:robust_modelerr}. It can be seen that Figure~\ref{fig:robust_modelerr} has no significant difference from Figure~\ref{fig:dynamic_v}.

\begin{remark}
In these simulations the algorithm performs one iteration during every 6 second time window.  
   However, since the computation time is very quick $(\approx 7$ microseconds)  we can easily perform hundreds of iterations per each 6 second time window provided that the communication is fast enough. 
   Even if the communication is slow compared to the computation, these simulations show that it is fine if the nodes do not communicate at every iteration or if they do not wait for the received communication before performing their computation. This means that it is often reasonable to do multiple iterations per time window.  However, the simulations show that performing one iteration per time window is often enough.
\end{remark}

\begin{figure*}
	\centering
	\begin{subfigure}[t]{0.32\textwidth}
		\includegraphics[width=\textwidth]{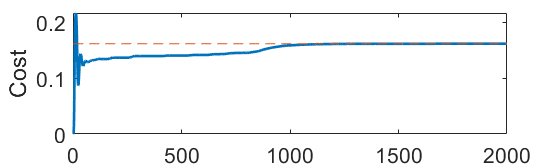}
		\caption{Cost function under $p$-only control.}
		\label{fig:static_ponly_cost}
	\end{subfigure}
	\begin{subfigure}[t]{0.32\textwidth}
		\includegraphics[width=\textwidth]{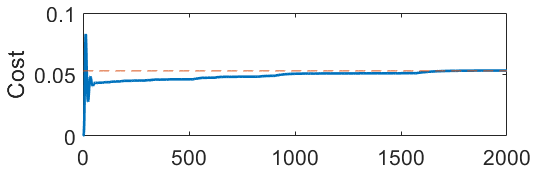}
		\caption{Cost function under $q$-only control.}
		\label{fig:static_qonly_cost}
	\end{subfigure}
	~
	\begin{subfigure}[t]{0.32\textwidth}
		\includegraphics[width=\textwidth]{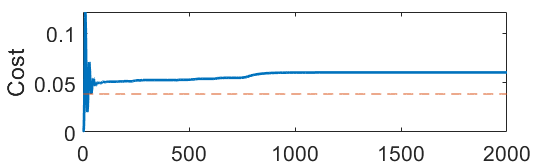}
		\caption{Cost function under joint $p,q$ control.}
		\label{fig:static_pq_cost}
	\end{subfigure}
	\caption{Cost function in the static setting. }\label{fig:static_cost}
\end{figure*}

\subsection{Optimality Under AC Power Model}
 Theorem~\ref{MainTheorem} ensures  that the  \textbf{DIST-OPT} algorithm drives the system operating point to the solution of a optimization problem assuming   that the power flow model is the linear. In this subsection, we test whether the optimality still holds under the nonlinear model, i.e. whether the fixed point of the algorithm is still the solution of the optimization problem~\eqref{MainProblem}  the power flow equation is the full AC model. 


To this end, we run three tests. In all the tests, we use a time-invariant load profile. In the first test, we set for each $i$, $\bar{p}_i = 0.1$, $\underline{p_i} = -0.1$ and $\bar{q}_i = \underline{q}_i = 0$, i.e., we only use active power $p$ injection to do the control. In the second test, we set $\bar{p}_i = \underline{p}_i = 0$ $\bar{q}_i = 0.1$, $\underline{q_i} = -0.1$, i.e., we use only reactive power $q$ to do the control. In the third test, we set $\bar{p}_i = 0.1$, $\underline{p_i} = -0.1$, $\bar{q}_i = 0.1$, $\underline{q_i} = -0.1$, i.e., we use both active power and reactive power to do the control. 
The cost function of the three cases are shown in Figure~\ref{fig:static_cost}, where the dashed line depict the optimal solution of \eqref{MainProblem} under nonlinear AC Power Flow Model~\eqref{eq:LinBranchFlow}, using the SOCP relaxation in \cite{low2014convex}. It is  seen from Figure~\ref{fig:static_ponly_cost} and Figure~\ref{fig:static_qonly_cost} that if we do only active power control or only reactive power control then \textbf{DIST-OPT}  drives the system to the optimum whereas if we do joint active-reactive power control then \textbf{DIST-OPT} drives the system to a non-optimum point. To further support the above observation we re-do the three tests under $10$ randomly generated load conditions. We consider the relative absolute error 
$$ \Big|\frac{Cost - Opt}{Opt}\Big| \times 100\% $$
where $Cost$ is the cost function achieved by running \textbf{DIST-OPT} for 4000 iterations, and $Opt$ is the optimal solution obtained by the SOCP relaxation. 
Our results indicate that the active-power-only and reactive-power-only controls achieve an 2.7\% and 4.3\% relative absolute error on average, respectively. 
%
%
  However, joint active-reactive control achieves an 59.92\% relative absolute error on average.

All these tests show that when doing active-power-only or reactive-power-only control, \textbf{DIST-OPT} can drive the system to (nearly) the optimum, and hence the conclusion of Theorem~\ref{MainTheorem}  still holds under nonlinear AC power flow models. However, when doing joint active-reactive control, \textbf{DIST-OPT} may fail to reach the optimum. 
%
We conjecture this may be due to that the linearized power flow model~\eqref{eq:PhysicalRelationship-1} that we used to develop our algorithm does not capture  well some nonlinearities in the coupling between $p$ and $q$ in the full AC mode~\eqref{eq:PhysicalRelationship-1}. 
 Nevertheless, our algorithm does a good voltage regulations when we do a joint $p$ and $q$ control, as illustrated by our experiments in the previous section. 
  Finally, we comment that reaching the optimal solution of an optimal power flow problem with nonlinear AC power flow equation is a very difficult non-convex problem, and to date there has been only limited theoretic understanding \cite{low2014convex}. Our results only empirically show \textbf{DIST-OPT} may reach the optimal solution under some circumstances. However to theoretically understand the optimality of  \textbf{DIST-OPT} under nonlinear AC power flow remains challenging and interesting future work.

\subsection{Test on the  extended  algorithm}
In this subsection, we test the more generalized algorithm discussed in Section~\ref{subsec:general}. We use the same setting as that in Figure~\ref{fig:dynamic} in Section~\ref{subsec:simu_timevarying}, except that we add the following apparent power constraint,
$$p_i^2 + q_i^2\leq \bar{s}_{i}^2 $$ 
where $\bar{s}_{i}$ is set as $0.12$ (slightly larger than the box constraint on $p_i, q_i$). We run the extended algorithm described in Section~\ref{subsec:general} with the modified STEP 2 in \eqref{eq:general_constr}. The resulting voltage trajectory is given in Figure~\ref{fig:dynamic_ap_v}, and trajectories of active power $p_i$, reactive power $q_i$, and apparent power $s_i = \sqrt{p_i^2 + q_i^2}$ are given in Figure~\ref{fig:dynamic_ap_pqs}. The results show that the extended algorithm can meet the additional apparent power constraint while still guaranteeing the voltage lies between the upper and lower limit.

\begin{figure}
	\includegraphics[width=\columnwidth]{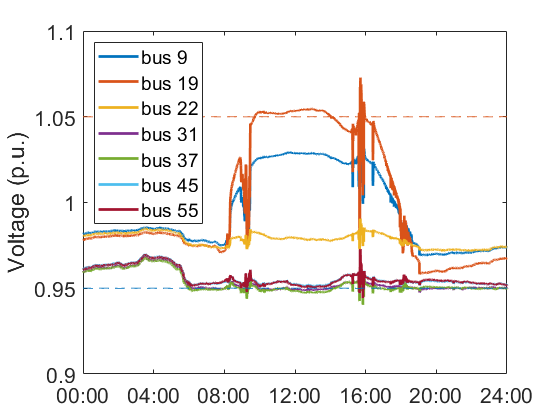}
	\caption{Voltage profile of the extended \textbf{DIST-OPT} algorithm with apparent power constraint.}\label{fig:dynamic_ap_v}
\end{figure}

\begin{figure*}
	\centering
	\begin{subfigure}[t]{0.32\textwidth}
		\includegraphics[width=\textwidth]{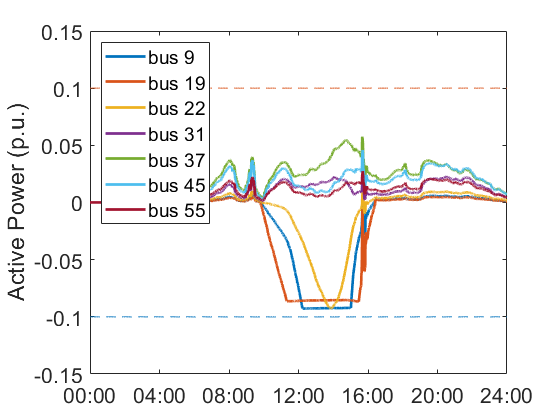}
		\caption{Active power.}
	\end{subfigure}
	\begin{subfigure}[t]{0.32\textwidth}
		\includegraphics[width=\textwidth]{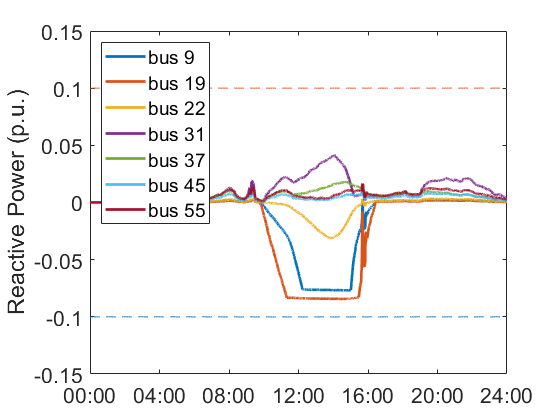}
		\caption{Reactive power.}
	\end{subfigure}
	~
	\begin{subfigure}[t]{0.32\textwidth}
		\includegraphics[width=\textwidth]{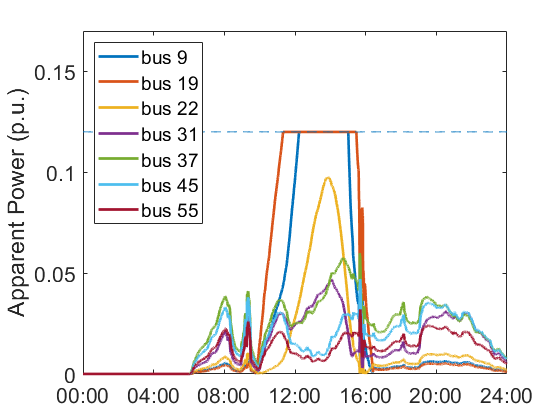}
		\caption{Apparent power.}
	\end{subfigure}
	\caption{Active, reactive, apparent power profile of the extended \textbf{DIST-OPT} algorithm with apparent power constraint. }\label{fig:dynamic_ap_pqs}
\end{figure*}



\section{Colusion}

 We studied distributed voltage control algorithms where the buses  perform local voltage control based only on local measurements and  communication to neighbours in the network. 
  We proved that our algorithms converge to an optimal voltage profile under linear power flow model even if the communication is asynchronous or delayed. 
  The good performance of our algorithm and  its robustness to asynchronous communications was further illustrated in simulations under realistic operation conditions using the full nonlinear AC power flow model. 
  Our simulations showed that our algorithm could reduced $80\%$ of the communication compared to a synchronous algorithm while achieving similar  performance. 
  In future work we will study how we can extend our algorithms to cover more dynamic loads such as vehicle charging or smart appliances.


\bibliographystyle{IEEEbib}
\bibliography{refs}

\appendices

\section{Convergence: Proof Theorem~\ref{MainTheorem}} \label{Sec:Conv}

 We now show that the algorithm converges to the optimal solution to Problem~\eqref{MainProblem}.
 The proof follows similar ideas as used in~\cite{Low_1999} to prove the convergence of asynchronous dual decomposition for internet data flow.
 However, we note that our problem is not a special case of the internet data flow problem and the proof ideas need to be adjusted to our problem to work. 
 In particular, the proof is based on the following two lemmas, proved in appendices~\ref{APPLemma:Gradient} and~\ref{APP:Lemma:Descent}, respectively.
\begin{lemma}  \label{Lemma:Gradient}
 For all $t\in \N$ following holds:
 \begin{align*}
    ||\nabla D(\bla(t)){-}g(t)|| \leq L\sqrt{N} \sum_{\tau=t-t_0}^{t-1} ||\bla(\tau){-}\bla(\tau+1)||,
 \end{align*} 
 where $t_0=d(\tau_{\max}+1)$,  $$L=2\left(\frac{||R||^2+||X||^2}{a_{\min}}\right),$$ and $g(t)$ is the approximate dual gradient in Equation~\eqref{eq:dualGradApprox}.
\end{lemma}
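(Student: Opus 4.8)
The plan is to bound the gradient mismatch by tracing it, through the primal minimization map, back to the accumulated change in the dual iterates over the delay window. First I would use the linearized model~\eqref{Eq:linearizedV} together with the formulas in Lemma~\ref{Lemma:DualMain} and Equation~\eqref{eq:dualGradApprox}. Both $\nabla D(\bla(t))$ and $g(t)$ stack the same two blocks $\vu-v(\cdot)$ and $v(\cdot)-\vb$; the only difference is that $\nabla D$ uses the exact minimizer $(p(\bla(t)),q(\bla(t)))$ while $g(t)$ uses $(p(t),q(t))$. Hence both blocks of $\nabla D(\bla(t))-g(t)$ equal $\pm w$ with $w:=v(p(t),q(t))-v(p(\bla(t)),q(\bla(t)))$, and since $v(p,q)=Rp+Xq+\vec{1}v_0$ we get $w=R(p(t)-p(\bla(t)))+X(q(t)-q(\bla(t)))$ and $\|\nabla D(\bla(t))-g(t)\|=\sqrt{2}\,\|w\|$.

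Second, I would bound the primal differences. Comparing \textbf{STEP 2} of \textbf{DIST-OPT} with~\eqref{eq:LocalProblem-c}, $p_i(t)$ and $p_i(\bla(t))$ are the box-projections of $\frac{1}{a_i\re}(z_i\re(t)-b_i\re)$ and $\frac{1}{a_i\re}(\sum_j R_{ij}\lambda_j(t)-b_i\re)$, respectively. Since projection onto $[\pu_i,\pb_i]$ is nonexpansive and $a_i\re\geq a_{\min}$, we obtain $|p_i(t)-p_i(\bla(t))|\leq \frac{1}{a_{\min}}|z_i\re(t)-\sum_j R_{ij}\lambda_j(t)|$, and analogously for $q$ with $X$ and $z\im$. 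Thus $\|p(t)-p(\bla(t))\|\leq \frac{1}{a_{\min}}\|z\re(t)-R\lambda(t)\|$ and $\|q(t)-q(\bla(t))\|\leq \frac{1}{a_{\min}}\|z\im(t)-X\lambda(t)\|$.

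Third, I would control the estimation error $z\re(t)-R\lambda(t)$. By Lemma~\ref{lemma:delayEqu}, $z_i\re(t)=\sum_j R_{ij}\lambda_j(t-\bar\tau_{ij}(t))$ with $\bar\tau_{ij}(t)\leq(\tau_{\max}+1)d=t_0$, so its $i$-th error component is $\sum_j R_{ij}(\lambda_j(t-\bar\tau_{ij}(t))-\lambda_j(t))$. Telescoping, $\lambda_j(t-\bar\tau_{ij}(t))-\lambda_j(t)=-\sum_{\tau=t-\bar\tau_{ij}(t)}^{t-1}(\lambda_j(\tau+1)-\lambda_j(\tau))$, and because $\bar\tau_{ij}(t)\leq t_0$ every increment that appears lies in the uniform window $[t-t_0,t-1]$. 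Writing $a(\tau)$ for the vector with entries $|\lambda_j(\tau+1)-\lambda_j(\tau)|$ and using that the entries $R_{ij}$ are nonnegative gives the componentwise bound $|z_i\re(t)-\sum_j R_{ij}\lambda_j(t)|\leq \sum_{\tau=t-t_0}^{t-1}\sum_j R_{ij}a_j(\tau)$. Applying Cauchy–Schwarz to each row and summing over $i$ bounds this by the Frobenius norm, $\|z\re(t)-R\lambda(t)\|\leq \|R\|_F\sum_{\tau=t-t_0}^{t-1}\|a(\tau)\|$, and $\|R\|_F\leq\sqrt{N}\,\|R\|$; this row-wise Cauchy–Schwarz step is exactly where the factor $\sqrt{N}$ enters. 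The same argument yields $\|z\im(t)-X\lambda(t)\|\leq \sqrt{N}\,\|X\|\sum_\tau\|a(\tau)\|$.

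Finally I would collect the pieces. Combining the three steps, $\|w\|\leq \frac{\sqrt{N}(\|R\|^2+\|X\|^2)}{a_{\min}}\sum_{\tau=t-t_0}^{t-1}\|a(\tau)\|$. Since $\lambda=\un{\lambda}-\bar\lambda$, each increment satisfies $\|a(\tau)\|=\|\lambda(\tau+1)-\lambda(\tau)\|\leq \sqrt{2}\,\|\bla(\tau+1)-\bla(\tau)\|$; multiplying by the $\sqrt{2}$ from $\|\nabla D-g\|=\sqrt{2}\|w\|$, the two factors of $\sqrt{2}$ combine into the constant $2$ appearing in $L=2(\|R\|^2+\|X\|^2)/a_{\min}$, and the claimed bound $\|\nabla D(\bla(t))-g(t)\|\leq L\sqrt{N}\sum_{\tau=t-t_0}^{t-1}\|\bla(\tau)-\bla(\tau+1)\|$ follows. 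I expect the main obstacle to be the bookkeeping in the third step: because $\bar\tau_{ij}(t)$ depends jointly on the estimating bus $i$, the source bus $j$, and time, the error is a \emph{masked} matrix--vector product rather than a clean one, so the telescoping must be extended to the common window $[t-t_0,t-1]$ and the spectral/Frobenius norms tracked carefully to land exactly on the constant $L\sqrt{N}$.
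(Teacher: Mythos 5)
Your proof is correct and follows essentially the same route as the paper's: reduce the gradient mismatch to the voltage difference (picking up $\sqrt{2}$), pass to the primal difference via $\|R\|,\|X\|$, invoke Lemma~\ref{lemma:delayEqu} and telescope the delayed multipliers over the window $[t-t_0,t-1]$, with $\sqrt{N}$ entering through a norm-equivalence step and a final $\sqrt{2}$ from converting $\lambda$-increments to $\bla$-increments. The only difference is in the bookkeeping of the middle step: you bound the estimation error $\|z\re(t)-R\lambda(t)\|$ at the vector level via nonexpansiveness, row-wise Cauchy--Schwarz and $\|R\|_F\le\sqrt{N}\,\|R\|$, whereas the paper argues componentwise using $|R_{ij}|\le\|R\|$ and $\|\cdot\|_1\le\sqrt{N}\,\|\cdot\|$ before converting back to a Euclidean bound --- your variant is slightly cleaner and lands exactly on the stated constant $L\sqrt{N}$.
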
 
\begin{lemma}  \label{Lemma:Descent}
 For all $t\in \N$ following holds:
  \begin{align*}
     D(\bla(t{+}1)) \geq& D(\bla(0)) {+} \bigg(\frac{1}{\gamma}{-}\frac{L}{2} {-}(\tau_{\max}(d{+}1){+}1) L \sqrt{N} \bigg)  \\
                                  &~~~~~~~~~~~~~~~~~~~ \times~\sum_{\tau=0}^{t} ||\bla(\tau+1)-\bla(\tau)||^2  
  \end{align*}
  In particular, from the Slaters condition in Equation~\eqref{Eq:Slater} the duality gap is zero and, hence, if  $\gamma$ is chosen as in Equation~\eqref{eq:MainStepSize} then
  $$\sum_{\tau=0}^{\infty} ||\bla(\tau{+}1)-\bla(\tau)||^2<\infty  \text{ and }\lim_{t\rightarrow\infty}  ||\bla(t{+}1)-\bla(t)||=0.$$
\end{lemma}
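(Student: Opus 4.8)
The plan is to treat the recursion $\bla(t{+}1) = \lceil \bla(t) + \gamma g(t)\rceil_+$ of Equation~\eqref{eq:delayed dual descent} as an \emph{inexact} projected gradient ascent step on the concave dual $D$, and to combine three ingredients: the ascent inequality coming from the $L$-Lipschitz continuity of $\nabla D$ established in Lemma~\ref{Lemma:DualMain}, the variational inequality characterizing the Euclidean projection onto $\R_+^{2N}$, and the gradient-error bound of Lemma~\ref{Lemma:Gradient}. Writing $\Delta(t) := \bla(t{+}1)-\bla(t)$, the first half of the argument produces a per-iteration inequality, and the second half a telescoping argument that yields the stated coefficient; boundedness of $\{D(\bla(t))\}$ from above then closes the proof.

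First I would record two elementary inequalities. Because $D$ is concave with $L$-Lipschitz gradient, $D(\bla(t{+}1)) \ge D(\bla(t)) + \langle \nabla D(\bla(t)), \Delta(t)\rangle - \tfrac{L}{2}\|\Delta(t)\|^2$. Next, since $\bla(t{+}1)$ is the projection of $\bla(t)+\gamma g(t)$ onto the convex cone $\R_+^{2N}$ and $\bla(t)\in\R_+^{2N}$, the obtuse-angle property of projections gives $\langle \bla(t)+\gamma g(t)-\bla(t{+}1),\,\bla(t)-\bla(t{+}1)\rangle \le 0$, which after rearranging reads $\langle g(t),\Delta(t)\rangle \ge \tfrac{1}{\gamma}\|\Delta(t)\|^2$. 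Splitting $\nabla D(\bla(t)) = g(t) + (\nabla D(\bla(t))-g(t))$, applying Cauchy--Schwarz to the error term, and inserting Lemma~\ref{Lemma:Gradient} yields
$$D(\bla(t{+}1)) \ge D(\bla(t)) + \Big(\tfrac{1}{\gamma}-\tfrac{L}{2}\Big)\|\Delta(t)\|^2 - L\sqrt{N}\,\|\Delta(t)\|\!\!\sum_{\tau=t-t_0}^{t-1}\!\!\|\Delta(\tau)\|,$$
with $t_0 = d(\tau_{\max}{+}1)$ and the convention $\Delta(\tau)=0$ for $\tau<0$ (justified by the zero initialization in \textbf{STEP 1}).

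The main work is then to sum this bound over $t=0,\dots,T$, telescope the left-hand side to $D(\bla(T{+}1))-D(\bla(0))$, and control the accumulated cross terms. I would apply Young's inequality $\|\Delta(t)\|\|\Delta(\tau)\| \le \tfrac12(\|\Delta(t)\|^2+\|\Delta(\tau)\|^2)$ to each product and then interchange the order of summation in the resulting double sum $\sum_{t}\sum_{\tau=t-t_0}^{t-1}$, observing that each index appears in at most $t_0$ inner ranges. Collecting the squared norms produces a bound of the form $D(\bla(T{+}1)) - D(\bla(0)) \ge c\sum_{\tau=0}^{T}\|\Delta(\tau)\|^2$, where $c = \tfrac{1}{\gamma}-\tfrac{L}{2}-(\tau_{\max}(d{+}1){+}1)L\sqrt{N}$ once the bookkeeping is carried out. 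This constant-tracking step --- making the per-index multiplicities line up so that the leftover coefficient is exactly the one in the statement and is \emph{strictly positive} for every $\gamma$ in the interval~\eqref{eq:MainStepSize} --- is the part I expect to be the most delicate and the easiest place to slip on a stray factor.

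Finally, I would invoke the Slater condition~\eqref{Eq:Slater}: it guarantees that the duality gap is zero and that the common optimal value $D^\star$ is finite, so by weak duality $D(\bla(t)) \le D^\star$ for all $t$. Since the step-size choice makes $c>0$, the telescoped inequality forces $c\sum_{\tau=0}^{T}\|\Delta(\tau)\|^2 \le D^\star - D(\bla(0))$ uniformly in $T$; letting $T\to\infty$ gives $\sum_{\tau=0}^\infty \|\bla(\tau{+}1)-\bla(\tau)\|^2 < \infty$, and the summability of the terms of a convergent series immediately implies $\|\bla(t{+}1)-\bla(t)\|\to 0$, as claimed.
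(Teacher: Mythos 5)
Your proposal follows essentially the same route as the paper's proof: the descent inequality from the $L$-Lipschitz continuity of $\nabla D$, the projection obtuse-angle bound $\langle g(t),\Delta(t)\rangle \geq \tfrac{1}{\gamma}\Vert\Delta(t)\Vert^2$, Cauchy--Schwarz combined with Lemma~\ref{Lemma:Gradient} on the gradient error, and then a sum-over-$t$ bookkeeping step before invoking Slater's condition, weak duality, and boundedness of $D(\bla(t))$ from above. The only (cosmetic) difference is that you make Young's inequality and the summation interchange explicit, which in fact yields the slightly tighter multiplicity $t_0$ where the paper's inequality $\sum_{k=t-t_0}^{t-1} s_k s_t \leq \sum_{k=t-t_0}^{t} s_k^2$ gives $t_0+1$; since $t_0 \leq t_0+1$, your bound still implies the stated one.
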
 

 The two lemmas show that the  approximate $g(t)$ converges to the true gradient $\nabla D(\bla(t))$ as $t$ goes to infinity, i.e.,
 $$ \lim_{t\rightarrow \infty }||\nabla D(\bla(t))-g(t)||=0.$$
 In particular, Lemma~\ref{Lemma:Gradient} shows that the distance between  $g(t)$ and $\nabla D(\bla(t))$  is bounded by the finite sum
 $$\sum_{\tau=t-t_0}^{t-1} ||\bla(\tau)-\bla(\tau+1)||,$$
 times a constant factor. 
Lemma~\ref{Lemma:Descent}  the shows that the terms of the sum converge to zero when $\gamma$ is chosen as in Equation~\eqref{eq:MainStepSize}. 
 Therefore, the sum also converges to $0$, since it has only $t_0$ terms. 
 We now use these results to proof the theorem.
 
 \underline{\textbf{Proof of Theorem~\ref{MainTheorem}:}}
   We start by showing that every limit point of $\bla(t)$ is an optimal solution to the dual problem.
  Note that the sequence $\bla(t)$ is bounded since from Lemma~\ref{Lemma:Descent}
   $$ \bla(t) \in\{\bla \in \R_+^{2N} | D(\bla)\geq D(\bla(0))\}~~\text{ for all }~~t\in \N$$
   and every level sets is bounded~\cite[Proposition B.9]{nonlinear_bertsekas}.\footnote{Note that from Slaters condition (Equation~\eqref{Eq:Slater}) the set of optimal solutions to the dual problem is bounded, see Lemma~1 in~\cite{nedic2009approximate}.} 
  Let $\bla^{\star}$ be some limit point of $\bla(t)$ and let $\bla(t_i)$ be a subsequence that converges to $\bla^{\star}$.  
  Then we have from the continuity of $\nabla D(\cdot)$ that 
  $ \lim_{i \rightarrow \infty } \nabla D(\bla(t_i))=\nabla D(\bla^{\star}).$ 
  The gradient  approximate sequence $g(t_i)$ also converges to $\nabla D(\bla^{\star})$, since from  the triangle inequality we get
  \begin{multline*}
     \lim_{i \rightarrow \infty} ||\nabla D(\bla^{\star})-g(t_i))|| \leq \lim_{i \rightarrow  \infty}  ||\nabla D(\bla^{\star})-\nabla D(\bla(t_i))||  \\ 
       + \lim_{i \rightarrow \infty }||\nabla D(\bla(t_i))-g(t_i)||  =0
  \end{multline*}
   where the second limit convergence to zero because of lemmas~\ref{Lemma:Gradient} and~\ref{Lemma:Descent}. 
   Therefore, we have 
   \begin{align*} 
      \lceil \bla^{\star}+\gamma \nabla D(\bla^{\star})\rceil_+ -\bla^{\star} =& \lim_{i\rightarrow \infty}   \lceil \bla(t_i)+\gamma g(t_i)\rceil_+-\bla(t_i) \\
      =&  \lim_{i\rightarrow \infty}  \bla(t_i+1)-\bla(t_i)=0,
   \end{align*}
   from Lemma~\ref{Lemma:Descent}. From the projection theorem~\cite[Proposition 2.1.3]{nonlinear_bertsekas} we have that
   $$ \langle \nabla D(\bla^{\star}),\bla^{\star}-\bla)\geq 0,~~\text{ for all }~~ \bla \in \R^{2N},$$
   which implies that $\bla^{\star}$ is the optimal solution to the dual problem~\cite[Proposition 2.1.2]{nonlinear_bertsekas}. 
   
 We can now show that $q(t)$ converges to $q^{\star}$.  
 The sequence $q(t)$ is bounded since it is in $[\un{q} ,\bar{q}]$. 
 Moreover, since the function $q(\cdot)$ is continuous, see Equation~\eqref{eq:primal_update_Cent}, and from strong duality, every subsequence of $q(t)$ convergences to $q^{\star}=q(\bla^{\star})$. 
 Therefore, we can conclude that $q(t)$ converges to $q^{\star}$.

\section{Proof of Lemma~\ref{Lemma:DualMain}} \label{APP:Lemma:DualMain}

 The gradient of the Lagrangian function in Equation~\eqref{EQ:LAG} with respect to $p$ and $q$ is
 \begin{align*}
   \nabla\mathcal{L}(p,q,\bla)
   {=} \left[\begin{array}{c}\nabla^p \mathcal{L}(p,q,\bla) \\ \nabla^q \mathcal{L}(p,q,\bla)\end{array}\right]
   {=}
   \left[\begin{array}{c}
      \BLambda\re p{+}b\re{+}R(\bar{\Blambda}{-}\un{\Blambda}) \\
       \BLambda\im q{+}b\im{+}X(\bar{\Blambda}{-}\un{\Blambda})
  \end{array}\right].
 \end{align*}
 From Proposition~6.1.1 in~\cite{nonlinear_bertsekas} the dual gradient is  
\begin{align} 
  \nabla  D(\Blambda) = \left[\begin{array}{c}   \vu-\v(p^{\star}(\bla),q^{\star}(\bla)) \\ \v(p^{\star}(\bla),q^{\star}(\bla) )-\vb  \end{array}\right] 
\end{align} 
  where \vspace{-0.2cm}
 \begin{equation}\label{eq:inProofOpt}
     (p^{\star}(\bla), q^{\star}(\bla) )= \underset{(p,q)\in[\pu,\pb]\times [\qu,\qb]}{\text{argmin }} \mathcal{L}(p,q,\bla). \vspace{-0.2cm}
  \end{equation}
  Therefore, to prove Equation~\eqref{eq:dualGrad} it suffices to   show that 
  $$p^{\star}(\bla)=p(\bla) ~~\text{ and }~~\q^{\star}(\bla)=\q(\bla)$$
  or,  equivalently, to show that $(p(\bla),q(\bla))$ is an optimal solution to Problem~\eqref{eq:inProofOpt}  (its solution is unique since $a_{\min}>0$). 
   By  Proposition 2.1.2. in~\cite{nonlinear_bertsekas} $(p(\bla),q(\bla))$ is an optimal solution to Problem~\eqref{eq:inProofOpt}   if and only if 
\begin{align*} 
      \left\langle  \nabla\mathcal{L}(p(\bla),q(\bla),\bla),
        \left[\begin{array}{c}   p{-}p(\bla) \\
                                                         q{-}q(\bla)
        \end{array} \right]     
         \right\rangle\geq 0 
\end{align*}
  for all $(p,q)\in [\pu,\pb] \times  [\qu,\qb]$. We have that
   \begin{align*}
            \left\langle  \nabla\mathcal{L}(p,q,\bla),
        \left[\begin{array}{c}   p{-}p(\bla) \\
                                                         q{-}q(\bla)
        \end{array} \right]     
         \right\rangle {=}&   \sum_{i=1}^N \nabla_i^p \mathcal{L}(p,q,\lambda) (p_i-p_i(\bla)) \\
                               &{+} \sum_{i=1}^N \nabla_i^q \mathcal{L}(p,q,\lambda) (q_i{-}q_i(\bla))
   \end{align*}
   and hence  it suffices to prove that each term of the two sums above is positive.  
   We prove that each term of the first sum is positive,  the proof for the second sum is identical. 
   In particular, we prove that each term of the  sum is positive by considering separately the following three cases:
   \begin{enumerate}[\hspace{2pt}a)]
     \item If  $p_i(\bla)\in (\un{p}_i,\bar{p}_i)$  then we show that 
     $$\nabla_i^p \mathcal{L}(p(\bla),q(\bla),\q(\bla),\bla)=0.$$ 
     \item 
     If $p_i(\bla)=\un{p}_i$ then we show that
     $$\nabla_i^p \mathcal{L}(p(\bla),q(\bla),\bla)\geq 0.$$
     \item  If $p_i(\bla)=\bar{p}_i$ then we show that 
     $$\nabla_i^p \mathcal{L}(p(\bla),q(\bla),\bla)\leq 0.$$ 
    \end{enumerate}
   We now conclude the proof by proving a), b), and c) below.  
   
  \underline{Prove of  a):} 
    Note that $p(\bla)=\left[p_{\text{UC}}^{\star}(\bla) \right]_{\pu}^{\pb}$ where
 $$p_{\text{UC}}^{\star}(\bla):= \underset{p\in \R^n}{\text{argmin }} \mathcal{L} (p,q,\bla) =  {\Lambda_{\texttt{P}}}^{-1} R( \BLLambda -\BULambda) -b\re     $$
 is the unconstrained optimizer of $\mathcal{L}(\cdot,q,\bla)$ (the optimal solution is independent of $q$).
  If $p_i(\bla)\in (\un{p}_i,\bar{p}_i)$ then $p_i(\bla)=[p_{\text{UC}}^{\star}(\bla)]_i$.  
  Using that $\BLambda_{\texttt{P}}$ is a diagonal we have   
  \begin{align*}
      \nabla_i^p \mathcal{L}(p(\bla),q(\bla),\bla)  =&  a_i\re p_i(\bla)+b_i\re+[R(\bar{\Blambda}-\un{\Blambda})]_i  \\
                                                                    =&  a_i\re [p_{\text{UC}}^{\star}(\bla)]_i+b_i\re+[R(\bar{\Blambda}-\un{\Blambda})]_i = 0                             
  \end{align*}
  since $\nabla_i^p \mathcal{L}(p^{\star}_{\text{UC}}(\bla),q,\bla)   =0$ and $p_{\text{UC}}^{\star}(\Blambda)$ is the  optimizer.  
  
  \underline{Prove of  b) and c):}  If $p_i(\bla)=\un{p}_i$ then $[p_{\text{UC}}^{\star}(\bla)]_i \leq p(\bla)$. Therefore, since $a_i\re>0$ we have
    \begin{align*}
     0 =& \nabla_i^p \mathcal{L}(p_{\text{UC}}^{\star}(\bla),q,\bla)  
        =  a_i\re [p_{\text{UC}}^{\star}(\bla)]_i+b_i\re+[R(\bar{\Blambda}-\un{\Blambda})]_i, \\
        \leq&  a_i\re p_i(\bla)+b_i\re+[R(\bar{\Blambda}-\un{\Blambda})]_i  = \nabla_i^p \mathcal{L}(p(\bla),q,\bla)
  \end{align*}
   Condition c) follows from similar arguments as condition b).

 Finally we  show that $\nabla D(\bla)$ is  $L$-Lipschitz continuous. 
 Take $\bla_1=(\un{\Blambda}_1,\bar{\Blambda}_1), \bla_2=(\un{\Blambda}_2,\bar{\Blambda}_2)\in \R_+^{2n}$, then from Equations~\eqref{eq:dualGrad} and~\eqref{eq:LocalProblem-c}  
 \begin{align*}
    ||\nabla D(\bla_1){-}\nabla D(\bla_2)|| \leq& \sqrt{2}|| \v(p(\bla_1),\q(\bla_1)){-}\v(p(\bla_1),\q(\bla_2))|| \\
                     \leq& \sqrt{2} \big(||R||~|| p(\bla_1)-p(\bla_2)|| \\ &~~~~~~~~~~~ +||X||~|| \q(\bla_1)-\q(\bla_2)||)\\
                     \leq& \sqrt{2} \left( \frac{||R||^2 {+}||X||^2}{a_{\min}} \right)~|| \un{\lambda}_1{-}\un{\lambda}_2{+}\bar{\lambda}_2{-}\bar{\lambda}_1||\\
                    \leq&   2 \left( \frac{||R||^2 +||X||^2}{a_{\min}} \right) ~|| \bla_1-\bla_2||,
 \end{align*} 
  where we have used the triangle inequality in  the first and last inequality and the fact that $||\Lambda_{\texttt{P}}^{-1}||,||\Lambda_{\texttt{Q}}^{-1}||\leq1/a_{\min}$ to obtain the 3rd inequality.

\section{Proof of Lemma~\ref{lemma:delayEqu}} \label{APP:lemma:delayEqu}

 We need the following definition.
\begin{defin}
   Consider a rooted tree $\mathcal{T}=(\mathcal{N},\mathcal{E})$. 
\begin{itemize}
  \item   For each node $i\in \mathcal{N}$ we define the set of \textbf{$r$-th descendants} of $i$ as follows 
 $$\mathcal{C}_i^r=\{ j\in \mathcal{N} : \sigma^r(j)=i\}.$$
  Moreover, define the set of $i$ and all of its descendants  as follows \vspace{-0.3cm}
 \begin{align*}
      \mathcal{D}(i)=&\bigcup_{k=0}^{\infty}  \mathcal{C}_i^r 
  \end{align*}
 \item We define the \textbf{height} of  a node $i\in \mathcal{N}$ as follows 
  $$\texttt{Height}(i)=\max\{n\in \N : C_i^n \neq \emptyset  \}.$$   

\item  We define the \textbf{depth} of node $i\in \mathcal{N}$ 
as the distance from $i$ to the root node $R\in \mathcal{R}$, i.e., 
  $$\texttt{Depth}(i)=\dist(i,R).$$

\item  We define the \textbf{most recent common ancestor} of nodes $i,j\in \mathcal{N}$ as follows 
  $$\texttt{MRCA}(i,j)=\underset{k\in \mathcal{A}_i \cap \mathcal{A}_j}{\text{argmax}} ~\texttt{Depth}(k),$$
  where $\mathcal{A}_i=\{k\in \mathcal{N}: \sigma^r(i)=k \text{ for some }r\in \N\}$ is the set of ancestors of node $i$.  
\end{itemize}
\end{defin}
 Using the notation from the definition, we  have the following claims (proved in the sequel):
\begin{itemize}
  \item \textbf{Claim 1:} For $\alpha_i(t)$ defined in Equation~\eqref{EQ:AlphaUpdate}   we have \vspace{-0.1cm}
   \begin{align*}  
      \lambda_i(t)+\sum_{j\in \mathcal{C}_i} \alpha_j(t) = \sum_{j\in \mathcal{D}(i)} \lambda_{j}(t-d_{ij}),
   \end{align*}
 where we set $\lambda_j(t)=0$ for $t<0$.
  \item \textbf{Claim 2:} For $\beta_i(t)$ defined in Equation~\eqref{EQ:BetaUpdate}   we have 
  $$\beta_i(t)= \sum_{k=1}^{\texttt{Depth}(i)} \chi_{\sigma^k(i)} \sum_{j\in \mathcal{D}(i,k)} \lambda(t-d_{ij}),$$
  where $\chi=[X_{11},\ldots,X_{NN}]$ and $\mathcal{D}(i,k)=\mathcal{D}(\sigma^k(i)) \setminus \mathcal{D}(\sigma^{k-1}(i))$ and $\lambda_j(t)=0$ for $t<0$.
 \item \textbf{Claim 3:} We have 
     $X_{ij}=X_{kk}$ where $k=\texttt{MRCA}(i,j)$, 
    i.e., $X_{ij}=\chi_{\texttt{MRCA}(i,j)}$.
\end{itemize}
 Plug in the equations from the three claims into Equation~\eqref{eq:MainZ} proves equations~\eqref{Eq:Delay_Z} and~\eqref{Eq:Delay_Z2}.
  We now prove the claims.

\underline{Prove of \textbf{Claim 1}:}
  The equation follows from the following equation (proved in the sequel) \vspace{-0.1cm}
   \begin{align*}  
       \alpha_i(t)=&\lambda_i(t)+\sum_{r=1}^{\texttt{Height}(i)} \sum_{j\in \mathcal{C}_i^r} \lambda_j(t-r), \\
                      =& \sum_{j\in \mathcal{D}(i)} \lambda_{j}(t-\dist(i,j)).
   \end{align*}
    We proof the result by induction over $\texttt{Height(i)}$. Suppose first that $\texttt{Height(i)}=0$, i.e., node $i$ is a leave. 
  Then the result follows from Equation~\eqref{EQ:AlphaUpdate}. Suppose now that the equation holds for $\texttt{Height(i)}=r$.  
 Then from Equation~\eqref{EQ:AlphaUpdate} 
 \begin{align*}
    \alpha_i(t)  =& \lambda_i(t)+\sum_{j\in \mathcal{C}_i} \alpha_j(t)   \\
                     =&  \lambda_i(t)+\sum_{j\in \mathcal{C}_i} \lambda_j(t-1) + \sum_{r=2}^{\texttt{Height}(i)} \sum_{j\in \mathcal{C}_i^r} \lambda_j(t-r)  \\
                    =& \lambda_i(t)+\sum_{r=1}^{\texttt{Height}(i)} \sum_{j\in \mathcal{C}_i^r} \lambda_j(t-r) 
 \end{align*}
   where we use the induction premises in the second equality.

\underline{Prove of \textbf{Claim 2}:} Writing out the recursion in Equation~\eqref{EQ:BetaUpdate} and using that $\beta_j(t)=0$, for all $t$, if $i$ if $i$ has no parent (i.e., if $j$ is the root) then we get
 \begin{align*}
    \beta_i(t){=}&  \hspace{-0.2cm}  \sum_{k=1}^{\texttt{Depth}(i)}  \hspace{-0.2cm}  \chi_{\sigma^k(i)} \left(\lambda_{\sigma^k(i)}(t+1-k)+ \hspace{-0.7cm} \sum_{ r\in \mathcal{C}_{\sigma^k(i)} \setminus \{\sigma^{k-1}(i)\}}  \hspace{-0.7cm}  \alpha_j(t-k) \right) \\ 
     {=}&  \hspace{-0.2cm}  \sum_{k=1}^{\texttt{Depth}(i)}  \vspace{-0.2cm}  \chi_{\sigma^k(i)} \vspace{-0.4cm} \sum_{j\in \mathcal{D}(i,k) }  \vspace{-0.5cm}  \lambda_{\sigma^k(i)}(t{+}1{-}(k{+}\overbrace{\dist(\sigma^k(i),j)}^{=\dist(i,j)}))
 \end{align*}

\underline{Prove of \textbf{Claim 3}:} Follows from that $\X_{ij}=2 \sum_{(h,k)\in\mathcal{P}_i\cap \mathcal{P}_j} x_{hk}$ and that $\texttt{MRCA}(i,j)$ is the end point of the intersection of the two paths $\mathcal{P}_i$  and $\mathcal{P}_j$.

 \section{Proof of Lemma~\ref{Lemma:Gradient}}  \label{APPLemma:Gradient}

  From equations~\eqref{eq:dualGrad} and~\eqref{eq:dualGradApprox}, we have
  \begin{align*}
    ||\nabla D(\bla(t)){-}g(t)|| =& \sqrt{2}  || v(q(\bla(t)))-v(q(t))|| \\
                                           \leq& \sqrt{N} L \sum_{\tau=t-t_0}^{t-1} || \bla(\tau){-}\bla(\tau{+}1)||  
  \end{align*}
  where the $\sqrt{2}$ factor in the first equation comes from the duplication of $v(\cdot)$ in equations~\eqref{eq:dualGrad} and~\eqref{eq:dualGradApprox} and the second equation comes from the following three inequalities (proved below): 
 \begin{align}
    &||v(q(\bla(t))){-}v(q(t))||  \leq  ||R||   || p(\bla(t))-p(t)|| \notag \\ &\hspace{4.2cm} + ||X||   || q(\bla(t))-q(t)||,  \label{eq:VdelayBound}\\
    & \hspace{-0.25cm} ||p(\bla(t)){-}p(t)|| {\leq}  \frac{ \sqrt{2N} ||R||}{a_{min}}   \sum_{\tau=t-t_0}^{t-1} || \bla(\tau){-}\bla(\tau{+}1)|| , \label{eq:PdelayBound}    \\
    & \hspace{-0.25cm} ||q(\bla(t)){-}q(t)|| {\leq}  \frac{\sqrt{2N} ||X||}{a_{min}}   \sum_{\tau=t-t_0}^{t-1} || \bla(\tau){-}\bla(\tau{+}1)|| . \label{eq:QdelayBound} 
 \end{align}

\noindent  Equation~\eqref{eq:VdelayBound} follows from the definition of $v(\cdot)$ in Equation~\eqref{Eq:linearizedV} and the fact that $||\cdot||$ is the induced matrix norm. 
 To prove Equation~\eqref{eq:PdelayBound}, Equation~\eqref{eq:QdelayBound} is provide similarly,
  we recall that from Equation~\eqref{eq:LocalProblem-c} and Lemma~\ref{lemma:delayEqu} we have  \vspace{-0.1cm}
  $$p_i(t)=\left[  \frac{1}{a_i\re} \sum_{j=1}^N R_{ij} \lambda_j(t-\bar\tau_{ji}(t)) -b_i\re  \right]_{\un{q}_i}^{\bar{q}_i},$$
  where $\bar\tau_{ij}(t)\leq t_0=d(\tau_{\max}+1)$. 
  Therefore, focussing on component $i$ of the vector $p(\bla(t))-p(t)$ and using the non-expansion property of the projection we get  \vspace{-0.2cm}
  \begin{align*}  
       |p_i(\bla(t))-p_i(t)| 
               \leq&  \frac{||R||}{a_{\min}}   \sum_{j=1}^N |\lambda_j(t)-\lambda_j(t-\bar\tau_{ji}(t)) |\\
               \leq& \frac{||R||}{a_{\min}}  \sum_{j=1}^N \sum_{k=t-\bar\tau_{ji}}^{t-1}   |\lambda_j(k+1)-\lambda_j(k) | \\
                    \leq& \frac{||R||\sqrt{N}}{a_{\min}}  \sum_{k=t-t_0}^{t-1}  || \lambda(k+1)-\lambda(k) ||           \\
                    \leq& \frac{ ||R||\sqrt{2 N}}{a_{\min}}  \sum_{k=t-t_0}^{t-1}  || \bla(k+1)-\bla(k) ||           
  \end{align*}  
  where 
  the first inequality comes by the definitions of $p_i(\bla(t))$ and $p_i(t)$, the the triangle inequality, and the fact that $1/a_i\re \leq 1/a_{\min}$ for all $i$.
  The second inequality comes by using the triangle inequality. 
  The  third inequality comes by adding extra terms to the inner sum (every term is positive) so it runs from $k=t-t_0$ to $t$, swapping the sums, and using the equivalence of norms, i.e., $||\cdot||_1\leq \sqrt{N}||\cdot ||$. 
  The final inequality is obtained by noting that $\lambda(k)=\un{\lambda}(k){-}\bar{\lambda}(k)$ so
  \begin{align*}
   \hspace{-0.3cm}  || \lambda(k{+}1){-}\lambda(k) ||^2 
          \hspace{-0.1cm} \leq & 2( || \un{\lambda}(k{+}1){-}\un{\lambda}(k)||^2{+}|| \bar{\lambda}(k{+}1){-}\bar{\lambda}(k)||^2 ) \\
           =& 2 ||\bla(k+1)-\bla(k)||^2. 
  \end{align*}
  Equation~\eqref{eq:QdelayBound} can now be obtained by using the equivalence of the $||\cdot||_{\infty}$ and $||\cdot||$ norms as in the prove of  Equation~\eqref{eq:VdelayBound}.

\section{Proof of Lemma~\ref{Lemma:Descent}} \label{APP:Lemma:Descent}

Set $\Delta(k)=\bla(k+1)-\bla(k)$.
 From the convexity of $-D(\cdot)$ we have~\cite[Theorem 2.1.5]{Book_Nesterov_2004} \vspace{-0.2cm}
 \begin{align*}
     -D(\bla(t{+}1)) \leq&{-} D(\bla(t)){-}\langle \nabla D(\bla(t)), \Delta(t)\rangle {+} \frac{L}{2} || \Delta(t) ||^2  \\
                           =& - D(\bla(t))-\langle \nabla D(\bla(t))-g(t),\Delta(t)\rangle  \\
                                 &-\langle g(t), \Delta(t) \rangle+ \frac{L}{2} || \Delta(t) ||^2\\
                           \leq&    - D(\bla(t))+|| \nabla D(\bla(t))-g(t)||~ ||\Delta(t)|| \\
                                 &  -\left(\frac{1}{\gamma}-\frac{L}{2}\right) ||\Delta(t)||^2, 
 \end{align*}
 where in the last inequality we have used
     $ \frac{1}{\gamma} ||\Delta(t)||^2\leq \langle g(t),\Delta(t)  \rangle$, 
  which is obtained by noting that $\bla(t+1)=  \lceil \bla(t)+ \gamma   g(t) \rceil_+$ (Equation~\eqref{eq:delayed dual descent}) 
    and hence  from the projection theorem in~\cite[Lemma 2.1.3 (b)]{nonlinear_bertsekas} we have  \vspace{-0.1cm}
  \begin{align*} 0\geq&  \langle  \bla(t)+ \gamma   g(t)-  \bla(t+1), \bla(t) -  \bla(t+1) \rangle \\
                        =& -\gamma \langle     g(t),\Delta(t) \rangle + ||\Delta(t)||^2.
  \end{align*}
  Expanding further by using Lemma~\ref{Lemma:Gradient} we get \vspace{-0.1cm}
 \begin{align*}
    -D(\bla(t{+}1)) \leq& - D(\bla(t))  -\left(\frac{1}{\gamma}-\frac{L}{2}\right) ||\Delta(t)||^2  \\ 
                                 & +L\sqrt{N} \sum_{k=t-t_0}^{t-1} ||\Delta(k)||~  ||\Delta(t)|| \\
                           \leq& - D(\bla(t)) -\left(\frac{1}{\gamma}-\frac{L}{2}\right) ||\Delta(t)||^2  \\
                                 & + L\sqrt{N}  \sum_{\tau=t-t_0}^{t} ||\Delta(k)||^2,
 \end{align*}
 where 
  the final inequality is obtained by using the fact that for any $s_{t-t_0},\ldots,s_t\in \R_+$ it holds that 
  $ \sum_{k=t-t_0}^{t-1} s_k s_t  \leq \sum_{k=t-t_0}^{t} s_k^2 .$
 If we sum over $t$ we get  \vspace{-0.3cm}
  \begin{align*}
     -D(\bla(t{+}1))  \leq& - D(\bla(0)) -\left(\frac{1}{\gamma}-\frac{L}{2}\right) \sum_{k=0}^{t} ||\Delta(k)||^2  \\
                                   & +L\sqrt{N} \sum_{k_1=0}^{t}\sum_{k_2=\tau_1-t_0}^{k_1}   ||\Delta(k)||^2, \\                     
                          \leq& {-} D(\bla(0)) {-} \left(\frac{1}{\gamma}-\frac{L}{2}-(t_0{+}1) L\sqrt{N}  \right)   \allowdisplaybreaks \\
                                  &~~~\times \sum_{\tau=0}^{t} ||\bla(\tau+1)-\bla(\tau)||^2,  
 \end{align*}
 which concludes the proof.

\begin{IEEEbiography}[{\includegraphics[width=1in,height=1.25in,clip,keepaspectratio]{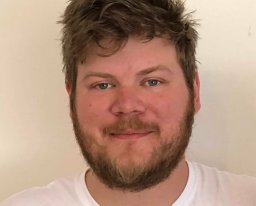}}]{Sindri Magn\'usson}
 received the B.Sc. degree in Mathematics from University of Iceland, Reykjavík Iceland, in 2011, the Masters degree in Applied Mathematics (Optimization and Systems Theory) from KTH Royal Institute of Technology, Stockholm Sweden, in 2013, and the PhD in Electrical Engineering from the same institution, in 2017. He was a postdoctoral researcher 2018-2019 at Harvard University, Cambridge, MA and a visiting PhD student at Harvard University for 9 months in 2015 and 2016. His research interests include large scale distributed/parallel optimization, machine learning, and control, both theory and applications.
\end{IEEEbiography}

 \begin{IEEEbiography}[{\includegraphics[width=1in,height=1.25in,clip,keepaspectratio]{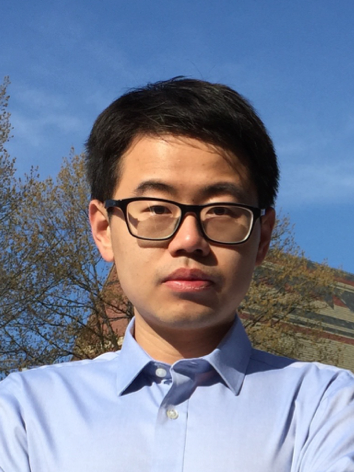}}]{Guannan Qu} 
  received his B.S. degree in Electrical Engineering from Tsinghua University in Beijing, China in 2014, and his Ph.D. from Harvard University in 2019. Since 2019 he has been a postdoctoral scholar in the Department of Computing and Mathematical Sciences at California Institute of Technology. His research interest lies in control, optimization, and learning particularly in network systems.
 \end{IEEEbiography}

 \begin{IEEEbiography}[{\includegraphics[width=1in,height=1.25in,clip,keepaspectratio]{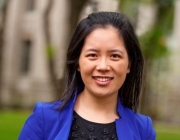}}]{Na Li} 
received the B.S. degree in mathematics and applied mathematics from Zhejiang University, Hangzhou, China, in 2007 and the Ph.D. degree in control and dynamical systems from the California Institute of Technology, Pasadena, CA, USA, in 2013. She is an Associate Professor with the School of Engineering and Applied Sciences, Harvard University. She was a Postdoctoral Associate with the Laboratory for Information and Decision Systems, Massachusetts Institute of Technology. Her research interests include the design, analysis, optimization, and control of distributed network systems, with particular applications to cyber-physical network systems. She received National Science Foundation CAREER Award in 2016, AFOSR Young Investigator Award (2017), Office of Naval Research Young Investigator Award in 2019, Donald P. Eckman Award in 2019 among others.
%
%
\end{IEEEbiography}

\end{document}